\newcommand{\R}{{\mathbb{R}}}
\newcommand{\f}{\mathcal{F}}
\newcommand{\N}{{\mathbb{N}}}
\newcommand{\st}{{\it s.t. \space}} 
\newcommand{\argmin}{\mathrm{arg}\min} 
\newcommand{\ra}{\rightarrow}
\newcommand{\transpose}[1]{\ensuremath{#1^{ \top}}}
\newcommand{\Ad}{\hat{A}}
\newcommand*{\rom}[1]{\expandafter\@slowromancap\romannumeral #1@}
\newcommand{\overbar}[1]{\mkern 1.5mu\overline{\mkern-1.5mu\text#1\mkern-1.5mu}\mkern 1.5mu}
\newtheorem{problem}[theorem]{Problem}
\newtheorem{assumption}{Assumption}
\newtcolorbox{resp1}[1][]{%
	enhanced jigsaw,%
	colback=gray!5!white,%
	colframe=gray!80!black,%
	size=small,%
	boxrule=1pt,%
	halign title=flush center,%
	coltitle=black,%
	breakable,%
	drop shadow=black!50!white,%
	attach boxed title to top left={xshift=1cm,yshift=-\tcboxedtitleheight/2,yshifttext=-\tcboxedtitleheight/2},%
	minipage boxed title=3cm,%
	boxed title style={%
		colback=white,%
		size=fbox,%
		boxrule=1pt,%
		boxsep=2pt,%
		underlay={%
			\coordinate (dotA) at ($(interior.west) + (-0.5pt,0)$);
			\coordinate (dotB) at ($(interior.east) + (0.5pt,0)$);
			\begin{scope}[gray!80!black]
				\fill (dotA) circle (2pt);
				\fill (dotB) circle (2pt);
			\end{scope}
		}%
	},%
	#1%
}
\title[Robust Control of Uncertain Switched Affine Systems via Scenario Optimization]{Robust Control of Uncertain Switched Affine Systems via Scenario Optimization}
\author{%
 \Name{Negar Monir} \Email{\texttt{\url{s.seyedmonir2@newcastle.ac.uk}}}\\
 \addr School of Computing, Newcastle University, Newcastle upon Tyne, United Kingdom
 \AND
 \Name{Mahdieh S. Sadabadi} \Email{\texttt{\url{mahdieh.sadabadi@manchester.ac.uk}}}\\
 \addr Department of Electrical and Electronic Engineering, The University of Manchester, United Kingdom
 \AND
 \Name{Sadegh Soudjani} \Email{\texttt{\url{sadegh@mpi-sws.org}}}\\
 \addr Max Planck Institute for Software Systems, Kaiserslautern, Germany
}
\begin{document}
\maketitle
\begin{abstract}%
Switched affine systems are often used to model and control complex dynamical systems that operate in multiple modes. However, uncertainties in the system matrices can challenge their stability and performance. This paper introduces a new approach for designing switching control laws for uncertain switched affine systems using data-driven scenario optimization. Instead of relaxing invariant sets, our method creates smaller invariant sets with quadratic Lyapunov functions through scenario-based optimization, effectively reducing chattering effects and regulation error. The framework ensures robustness against parameter uncertainties while improving accuracy. We validate our method with applications in multi-objective interval Markov decision processes and power electronic converters, demonstrating its effectiveness.
\end{abstract}
\begin{keywords}%
  Uncertain Switched Affine System, Scenario Optimization, Invariant Set of Attraction, Lyapunov Function, Switching Control Law.
\end{keywords}
\section{Introduction}
\noindent\textbf{Motivation:} Switched affine systems play a critical role in various practical applications, including decision-making in Markov Decision Processes (MDPs), power electronics, and control systems in robotics \citep{seatzu2006optimal, albea2015hybrid, beneux2018integral, iervolino2023lyapunov, monir2024lyapunov}. These systems are particularly useful for modeling situations where a system operates under multiple modes governed by distinct switching laws. However, a significant challenge in these applications is managing uncertainty, which can arise from parameter variations, external disturbances, or incomplete model information. This uncertainty greatly affects the stability and robustness of systems, especially in real-world scenarios where precise control and reliable decision-making are essential. Thus, it is crucial to develop robust control strategies that specifically address these uncertainties to ensure optimal and practical implementations in these fields.

\noindent\textbf{Related Works:} Switched affine systems have been gaining attention in the literature, particularly regarding their stability and control \citep{deaecto2010switched, albea2020robust, della2021data, seuret2023practical}. \cite{deaecto2010switched} focus on stability analysis and control design using quadratic Lyapunov functions and optimization techniques to minimize the volume of the invariant set. However, a common limitation in many existing studies is the assumption that system parameters are precisely known without accounting for uncertainties. To address this issue, \cite{albea2020robust} has proposed robust control methods tailored for discrete-time switched affine systems that incorporate parameter uncertainties in system matrices, 
which may lead to chattering effects and increased regulation error. \cite{seuret2023practical} have studied the case where the system model is entirely unknown and employed data-driven methods and experimental data to develop robust switching control laws. Recent studies have also proposed observer-based feedback strategies to improve robustness against uncertainties and disturbances in switched systems \citep{zhang2022output}.

Recently, data-driven methods have emerged as a powerful approach for designing robust control strategies for systems with unknown or uncertain dynamics \citep{de2019formulas,van2020data, berberich2020data}. Recent advances in data-driven synthesis with formal guarantees include building finite abstract models from data \citep{makdesi2021data,kazemi2024data,schon2024bayesian,nazeri2025data}, building continuous models from data \citep{skovbekk2025formal,schon2024data}, or synthesizing control strategies directly without building the model \citep{wu2023risk,kordabad2025data,salamati2024data}.

Scenario-based approaches are data-driven optimizations that have gained traction because they rely minimally on assumptions about data. Unlike other methods that require specific data conditions or model identification, sampled data is used to address robust control and chance-constrained problems, making it ideal for real-world applications. Scenario optimization simplifies robust optimization problems by using sampled scenarios to approximate uncertain constraints \citep{campi2009scenario,campi2021scenario,CAMPI20211,campi2018introduction,garatti2024non}. For instance, scenario programming have been applied to design stabilizing state-feedback laws for switched linear systems using observed trajectories and Lyapunov-based criteria in \cite{wang2024data}. Recent studies have applied these methods in formal verification and synthesis. For instance, \citet{salamati2022data} introduced scenario convex programming for safety verification in stochastic systems, enabling the derivation of robust solutions with probabilistic guarantees. \citet{dietrich2024nonconvex} extended nonconvex scenario programming to address the reachability problem. 
\citet{monir2025logic} and \citet{saoud2024temporal} have applied scenario optimization to study logic-based resilience for dynamical systems against temporal specifications with applications in power systems \citep{monir2025logic} and in water resource recovery facilities \citep{laino2025logic}.
These developments demonstrate the versatility and effectiveness of scenario optimization in the face of uncertainty.

\noindent\textbf{Contribution:} This paper proposes a novel approach to designing switching control laws for switched affine systems with uncertainties in the system matrices. Our approach utilizes a quadratic Lyapunov function and an invariant set of attraction combined with data-driven scenario optimization. This enables us to reduce both chattering effects and regulation errors, which are common drawbacks of existing methods. Additionally, this paper focuses on designing switching laws without auxiliary control inputs, unlike several prior works that employ both control inputs and switching laws to stabilize the system. We also demonstrate the applicability of our proposed method to various practical systems, including Multi-Objective Interval MDPs (MOIMDPs) and power electronic converter systems, showcasing its effectiveness in addressing real-world challenges.

\noindent\textbf{Outline:}
This paper is organized as follows. Section \ref{sec pre} outlines the background and problem formulation. Section \ref{sec switch law design} presents a method for designing a robust switching control law using Lyapunov functions and scenario optimization. Section \ref{sec app} validates this method with case studies on MOIMDPs and power electronic converters. Finally, Section \ref{sec conc} summarizes the contributions and future research directions.

\section{Preliminaries and Problem Setup} \label{sec pre}
\noindent\textbf{Notations:}
$\N$,  $\R$, $\R^{n}$, and $\R^{n \times m}$ denote, respectively, the sets of natural numbers {including zero}, real numbers, the n-dimensional Euclidean space, and the set of $n \times m$ real matrices. $\N_{\mathcal{P}}$ denotes the set $\{1,2,\dots, \mathcal{P}\}$ and $0_{n\times m}$ indicates the $n\times m$ matrix of zeros. For any $M \in \R^{n \times n}$, the notation $M>0$ ($M<0$) means $M$ is symmetric and positive (negative) definite. $\det(M)$ represents the determinant of $M$. For real vectors or matrices, $\left({ }^{\top}\right)$ refers to their transpose. For symmetric matrices, $(*)$ denotes each of their symmetric blocks. The interior of a set is represented by $\operatorname{int}(\cdot)$. The concatenation of two matrices $A$ and $B$ of appropriate dimensions in a row is denoted by $[A,B]$ and in a column by $[A;B]$. The convex combination of matrices with the same dimension is denoted as $X_\lambda=\sum_{i \in \mathbb{N}_{\mathcal{P}}} \lambda_i X_i$ with $\lambda=\left[\lambda_{1}, \ldots, \lambda_{\mathcal{P}}\right] \in \Lambda_{\mathcal{P}}$ where $\Lambda_{\mathcal{P}}:=\left\{\left[\lambda_{1}, \ldots, \lambda_{\mathcal{P}}\right] \mid \lambda_{i}\ge 0,\sum_{i \in \mathbb{N}_{\mathcal{P}}} \lambda_{i}=1\right\}$ is the unitary simplex. A square matrix is said to be Schur stable if all its eigenvalues are inside the unit circle on the complex plane.

\noindent\textbf{System Description:} Consider the discrete-time switched affine system defined by 
\begin{equation}
    x_{k+1} = A_{\pi} x_k + B_{\pi},\quad{k \in \mathbb{N},} \label{eq. dt SAS}
\end{equation}
where $x_k: \N \ra \R^n$ is the state at time $k$ and $\pi \in \N_\mathcal{P}$ is the switching law to be designed and can change depending on the state. The matrices $A_\pi$ and $B_\pi$ are uncertain but belong respectively to the intervals $ [\underbar{A}_\pi,\overbar{A}_\pi]$ and $ [\underbar{B}_\pi,\overbar{B}_\pi]$. Matrices $A_\pi$ and $B_\pi$ can be rewritten as $A_\pi = \hat{A}_\pi + \Delta A_\pi$ and $B_\pi = \hat{B}_\pi + \Delta B_\pi$, in which $\hat{A}_\pi$ and $\hat{B}_\pi$ are nominal matrices. $\hat{A}_\pi$ and $\hat{B}_\pi$ can be selected from the intervals $[\underbar{A}_\pi,\overbar{A}_\pi]$ and  $[\underbar{B}_\pi,\overbar{B}_\pi]$. Here, we select the midpoint of the interval, i.e. $\hat{A}_\pi = \frac{1}{2} (\underbar{A}_\pi + \overbar{A}_\pi)$, $\hat{B}_\pi = \frac{1}{2} (\underbar{B}_\pi + \overbar{B}_\pi)$.
\begin{remark}
    The above model class includes switched affine systems with parametric uncertainty. Any such systems can be over-approximated with the system in \eqref{eq. dt SAS} by taking the element-wise optimal values with respect to the uncertain parameters.
\end{remark}
Finally, the following assumption will be considered to define the set of allowable operating points for system \eqref{eq. dt SAS}, which is inspired by \citep{deaecto2016stability, albea2020robust}.
\begin{assumption}
    The desired operating point \( x_e \) belongs to $\f$, which is defined as
    \begin{equation}
        \f := \{x_e \in \R^n, \: \exists \lambda \in \Lambda_{\mathcal{P}}, \:(\hat{A}_\lambda - I) x_e + \hat{B}_\lambda = 0\}, \label{eq. operating points}
    \end{equation}
     where ${\Ad}_{\lambda}:=\sum_{\pi \in \N_{\mathcal{P}}}\lambda_{\pi}{\Ad}_{\pi}$, $\hat{B}_{\lambda}=\sum_{\pi \in \N_{\mathcal{P}}}\lambda_{\pi}\hat{B}_{\pi}$ and ${\Ad}_{\lambda}$ is assumed to be Schur stable.
    \label{ass. operating set}
\end{assumption}
\noindent\textbf{Error Dynamics:} Introducing the error variable $\xi_k := x_k - x_e$, the resulting error dynamics are
\begin{equation}
    \xi_{k+1} = A_{\pi} \xi_k + L_{\pi}, \label{eq. error dy}
\end{equation}
where $L_{\pi}= (A_{\pi} - I)x_e + B_{\pi}$. Hence, $\underbar{L}_{\pi}= (\underbar{A}_{\pi} - I)x_e + \underbar{B}_{\pi}$ and $\overbar{L}_{\pi}= (\overbar{A}_{\pi} - I)x_e + \overbar{B}_{\pi}$. This can be rewritten as $L_{\pi} = \hat{L}_{\pi} + \Delta L_{\pi}$, where $\hat{L}_{\pi}= (\hat{A}_{\pi} - I)x_e + \hat{B}_{\pi}$ and $\Delta L_{\pi} = \Delta A_{\pi} x_e + \Delta B_{\pi}$. 

\noindent\textbf{Invariant Set of Attraction:} The goal is to design a state-dependant switching function $\pi(\xi)$
so that ideally, \(\xi_k \rightarrow 0\) as \(k \rightarrow \infty\) for any initial condition \(x_0 \in \mathbb{R}^n\). However, in the context of uncertain discrete-time systems, achieving this goal is generally impossible. Therefore, we aim to design \(\pi(\xi)\) that switches between the elements of \(\N_{\mathcal{P}}\) to steer the trajectories of the system toward an \emph{invariant set of attraction} defined next, which encompasses the desired operating point. 

\begin{definition} \textbf{(Invariant Set of Attraction)}
A set $\Omega \subset \R^{n}$ is an invariant set of attraction of the system \eqref{eq. error dy} governed by the switching policy $\pi(\xi)$, if there exists a Lyapunov function $V:\R^{n}\rightarrow \R$ such that the following conditions are simultaneously fulfilled: (i)~$0_{n \times 1} \in \Omega$, (ii)~ if ${\xi}_k \notin \Omega$, then $\Delta V({\xi}):= V({\xi}_{k+1})-V({\xi}_k) < 0$, and (iii)~if ${\xi}_k \in \Omega$, then ${\xi}_{k+1} \in \Omega$. These conditions are denoted as specification $\Psi$ for $\Omega$. We denote by $\Omega \models \Psi$ whenever $\Omega$ satisfies these three conditions.
\label{def. roa}
\end{definition}

The existence of such a Lyapunov function ensures that, under the switching policy $\pi$, the error will converge to the set $\Omega$ from any initial error $\xi_0$ and remain within $\Omega$ thereafter. This property classifies the system as \emph{practically stable}. We employ a quadratic candidate Lyapunov function
\begin{equation}
    V(\xi) := d + {2h^\top}\xi +\xi^\top F\xi, \label{eq. lyapunov func}
\end{equation}
where $h \in \R^n$, $0<F \in \R^{n \times n}$, and $h^\top F^{-1} h = d \in \R$. 
Obviously, $V$ is a convex function such that \( V(\xi) > 0 \) for all \( \xi \) except at \( \hat{\xi} = -F^{-1} h \), where \( V(\xi) = 0 \). This naturally leads to the following minimum-type state feedback switching control~law
\begin{equation}
    \pi(\xi) = \argmin_{\pi \in \N_{\mathcal{P}}} V(A_\pi \xi + L_\pi). \label{eq. switch law}
\end{equation}
Furthermore, we consider the level set of $V$ with $\rho>0$ defined as 
\begin{equation}
    \Omega(\rho) = \{\xi \in \R^n: V(\xi) \le \rho \}. \label{eq. roa}
\end{equation}
We aim to identify a minimal-volume invariant set of attraction to ensure that error trajectories starting outside will eventually enter, including the origin, and prevent solutions from exiting once inside. In the steady state, error trajectories remain within this set, and smaller sets help reduce chattering amplitude. We thus define the following problem:
\begin{resp1}
\begin{problem}
Given the error dynamics in \eqref{eq. error dy} and the desired operating point $x_e$, find a set $\Omega(\rho)$ as in Definition \ref{def. roa} that satisfies the specification $\Psi$ with a confidence of at least $(1-\beta) \in (0,1)$ and the associated control law $\pi(\xi)$.
\label{prblm. roa}
\end{problem}
\end{resp1}
\section{Data-driven Design of Robust Switching Control Law} \label{sec switch law design}
In this section, we will present an algorithm designed to address Problem \ref{prblm. roa}.

\subsection{Stability Analysis} \label{subsec. roa design}
The following theorem outlines the ellipsoidal set of attraction with the minimum volume, based on the Lyapunov function in \eqref{eq. lyapunov func}, the switching control law in \eqref{eq. switch law}, and Definition \ref{def. roa}.
\begin{theorem}
    Consider the error dynamics of the switched affine system in \eqref{eq. error dy}. From the optimal solution $F > 0$, $W > 0$, and $h$ of the robust convex programming 
     \begin{align}  \boldsymbol{RCP}^{\rom{1}}\!\!:\!\!\! \inf_{F>0, W>0, h}\!\! \Big\{\! -\ln(\det(W)): & Q_\lambda > W,\, \begin{bmatrix}
        Q_\lambda & \varrho_\lambda \\
        * & 1-c_\lambda
    \end{bmatrix}\!>\!
    0, \begin{bmatrix}
        Q_\lambda & \varrho_\lambda \\
        * & 1
    \end{bmatrix} \!>
    \!0,\nonumber\\
    &\forall A_\pi \in [\underbar{A}_\pi, \overbar{A}_\pi],\forall L_\pi \in [\underbar{L}_\pi, \overbar{L}_\pi]\!\Big\}, \label{eq. rbst cvx prgrm}
\end{align}
with $Q_\lambda = \Sigma_\pi \lambda_\pi(F -  A_\pi^\top F A_\pi)$, $\varrho_\lambda = \Sigma_\pi \lambda_\pi(A_\pi^\top h + A_\pi^\top F L_\pi -h)$ and $c_\lambda = \Sigma_\pi \lambda_\pi(2h^\top L_\pi + L_\pi^\top F L_\pi)$, determine the quadratic Lyapunov function $V(\xi)$ as defined in \eqref{eq. lyapunov func}. The state-dependent switching control law \eqref{eq. switch law} ensures the existence of the set $\mathcal{X}$ centered at $\xi_o = Q_\lambda^{-1} \varrho_\lambda$ for all $A_\pi \in [\underbar{A}_\pi, \overbar{A}_\pi]$ and $L_\pi \in [\underbar{L}_\pi, \overbar{L}_\pi]$ as
\begin{align}
    \mathcal{X} =\{&\xi \in \R^{n}: {(\xi-\xi_o)^\top} {W} (\xi-\xi_o) \leq 1\},   \label{eq. X set}
\end{align}
which satisfies the first two conditions of Definition \ref{def. roa} with minimum volume. Moreover, the center $\hat{\xi}$ of the Lyapunov function \eqref{eq. lyapunov func} lies in the interior of \eqref{eq. X set}.
    \label{thm. set of attraction}
\end{theorem}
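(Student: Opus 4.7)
The plan is to verify the two Lyapunov conditions in Definition~\ref{def. roa} by using the minimum–type switching rule in \eqref{eq. switch law} to pass from the pointwise minimum of $V(A_\pi\xi+L_\pi)$ to an arbitrary convex combination over $\lambda\in\Lambda_{\mathcal{P}}$, and then to identify the three LMIs in $\boldsymbol{RCP}^{\mathrm{I}}$ as the Schur–complement encodings of (a) containment of the exterior of the ellipsoid $\mathcal X$ in the Lyapunov–decrease region, (b) the strict inequality placing the origin in $\mathcal X$, and (c) the minimum–volume objective. Concretely, first I would observe that by definition of $\pi(\xi)$,
\[
\min_{\pi\in\N_{\mathcal P}} V(A_\pi\xi+L_\pi)\;\le\;\sum_{\pi\in\N_{\mathcal P}}\lambda_\pi V(A_\pi\xi+L_\pi)
\]
for every $\lambda\in\Lambda_{\mathcal P}$, so it suffices to make the right-hand side decrease.

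\textbf{Algebraic reduction.} Substituting the quadratic form \eqref{eq. lyapunov func} and expanding $V(A_\pi\xi+L_\pi)-V(\xi)$ I would cancel the constant $d$, collect the quadratic piece in $\xi$ into $\sum_\pi\lambda_\pi(A_\pi^\top F A_\pi-F)=-Q_\lambda$, the linear piece into $2\sum_\pi\lambda_\pi(A_\pi^\top h+A_\pi^\top F L_\pi-h)^\top\xi=2\varrho_\lambda^\top\xi$, and the constant piece into $c_\lambda=\sum_\pi\lambda_\pi(2h^\top L_\pi+L_\pi^\top F L_\pi)$. Thus
\[
\sum_\pi\lambda_\pi V(A_\pi\xi+L_\pi)-V(\xi)\;=\;-\xi^\top Q_\lambda\xi+2\varrho_\lambda^\top\xi+c_\lambda.
\]
Completing the square around $\xi_o=Q_\lambda^{-1}\varrho_\lambda$ rewrites the negative of this expression as $(\xi-\xi_o)^\top Q_\lambda(\xi-\xi_o)-\bigl(\varrho_\lambda^\top Q_\lambda^{-1}\varrho_\lambda+c_\lambda\bigr)$.

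\textbf{Using the LMIs.} The Schur complement of the $(2,2)$ block in $\begin{bmatrix}Q_\lambda&\varrho_\lambda\\ *&1-c_\lambda\end{bmatrix}>0$ together with $Q_\lambda>0$ is precisely $\varrho_\lambda^\top Q_\lambda^{-1}\varrho_\lambda+c_\lambda<1$, so $(\xi-\xi_o)^\top Q_\lambda(\xi-\xi_o)\ge 1$ forces $\Delta V<0$. Coupling this with $Q_\lambda>W$ gives $(\xi-\xi_o)^\top W(\xi-\xi_o)\ge 1 \Rightarrow (\xi-\xi_o)^\top Q_\lambda(\xi-\xi_o)>1$, which is exactly condition (ii) for $\mathcal X$ as in \eqref{eq. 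X set}, and actually holds on the boundary too. For condition (i), the third LMI $\begin{bmatrix}Q_\lambda&\varrho_\lambda\\ *&1\end{bmatrix}>0$ gives $\xi_o^\top Q_\lambda\xi_o<1$, and with $Q_\lambda>W$ this yields $\xi_o^\top W\xi_o<1$, i.e.\ $0\in\mathcal X$. The robust nature of the programme in \eqref{eq. rbst cvx prgrm} guarantees these inequalities hold uniformly over $A_\pi\in[\underbar{A}_\pi,\overbar{A}_\pi]$ and $L_\pi\in[\underbar{L}_\pi,\overbar{L}_\pi]$.

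\textbf{Center in the interior and minimality of volume.} Writing $V(\xi)=(\xi-\hat\xi)^\top F(\xi-\hat\xi)$ via the identity $d=h^\top F^{-1}h$, the Lyapunov function is nonnegative and vanishes only at $\hat\xi=-F^{-1}h$, so $\Delta V(\hat\xi)\ge 0$. Since the argument above actually shows $\Delta V<0$ on the closed exterior $\{(\xi-\xi_o)^\top W(\xi-\xi_o)\ge 1\}$, the point $\hat\xi$ cannot lie outside or on the boundary of $\mathcal X$, hence $\hat\xi\in\mathrm{int}(\mathcal X)$. Finally, the volume of the ellipsoid \eqref{eq. X set} is proportional to $1/\sqrt{\det W}$, so minimizing $-\ln\det W$ in the RCP is precisely minimizing $\mathrm{vol}(\mathcal X)$ over $F,W,h$ satisfying the robust constraints. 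The main obstacle I expect is the bookkeeping in the expansion step — correctly collecting the $L_\pi$–dependent cross terms into the affine part so that the completed square is centered exactly at $\xi_o=Q_\lambda^{-1}\varrho_\lambda$, and then choosing to argue the Lyapunov-decrease on the closed exterior so that the interior–of–$\mathcal X$ assertion for $\hat\xi$ drops out automatically rather than requiring a separate LMI.
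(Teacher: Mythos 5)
Your proposal is correct and follows essentially the same route as the paper's proof: bounding $\min_\pi V(A_\pi\xi+L_\pi)$ by the $\lambda$-convex combination, completing the square to center the quadratic at $\xi_o=Q_\lambda^{-1}\varrho_\lambda$, and reading off the two small LMIs as Schur-complement encodings of $c_\lambda+\varrho_\lambda^\top Q_\lambda^{-1}\varrho_\lambda<1$ and $\varrho_\lambda^\top Q_\lambda^{-1}\varrho_\lambda<1$, with $Q_\lambda>W$ transferring these to the $W$-ellipsoid. Your argument for $\hat\xi\in\operatorname{int}(\mathcal X)$ (via $\Delta V(\hat\xi)\ge 0$ and strict decrease on the closed exterior) is only a contrapositive rephrasing of the paper's direct computation from $f(\hat\xi)>0$, so the two proofs are substantively identical.
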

\begin{proof}
    The candidate Lyapunov function \eqref{eq. lyapunov func} is strictly convex, except at $\hat{\xi} = -F^{-1} h$, where it achieves its global minimum value of $V(\hat{\xi}) = 0$. By applying the control switching law to the trajectories of error dynamics \eqref{eq. error dy}, it is evident that
    \begin{equation}
    \Delta V(\xi) = V(\xi_{k+1}) - V(\xi_k) = \min_\pi V(A_\pi \xi + L_\pi)-V(\xi) \le \underbrace{\Sigma_\pi \lambda_\pi V(A_\pi \xi + L_\pi)-V(\xi)}_{f(\xi)}. \label{eq. Delta V}
\end{equation}
The right-hand side of \eqref{eq. Delta V} allows us to define the following quadratic function
\begin{align}
    f(\xi) &= \Sigma_\pi \lambda_\pi V(A_\pi \xi + L_\pi)-V(\xi) \nonumber \\
    &= \Sigma_\pi \lambda_\pi \{ \xi^\top(A_\pi^\top F A_\pi - F) \xi + 2 (h^\top A_\pi + L_\sigma^\top F A_\sigma - h^\top)\xi + (2h^\top L_\pi + L_\pi^\top F L_\pi)\} \nonumber\\
    &= -\xi^\top \underbrace{\{\Sigma_\pi \lambda_\pi(F -  A_\pi^\top F A_\pi)\}}_{Q_\lambda} \xi + 2 \{\underbrace{\Sigma_\pi \lambda_\pi(A_\pi^\top h + A_\pi^\top F L_\pi -h)}_{\varrho_\lambda(h)}\}^\top \xi  \nonumber \\
    & \quad + \underbrace{\Sigma_\pi \lambda_\pi(2h^\top L_\pi + L_\pi^\top F L_\pi)}_{c_\lambda(h)} \nonumber \\
    &= -(\xi - Q_\lambda^{-1} \varrho_\lambda)^\top Q_\lambda (\xi - Q_\lambda^{-1} \varrho_\lambda) + c_\lambda + \varrho_\lambda^\top Q_\lambda^{-1} \varrho_\lambda. \label{eq. f(xi)}
\end{align}
If the robust convex programming problem in \eqref{eq. rbst cvx prgrm} admits a solution, then there exist symmetric positive definite matrices $F$ and $W$ such that $Q_\lambda = \Sigma_\pi \lambda_\pi(F -  A_\pi^\top F A_\pi) > W > 0, \, \forall A_\pi \in [\underbar{A}_\pi,\overbar{A}_\pi]$. 
Given that \( Q_{\lambda} > W > 0 \) and by applying the second constraint from \eqref{eq. rbst cvx prgrm}, the set defined in \eqref{eq. X set} ensures that \( \Delta V(\xi) < 0 \) for all \( \xi \notin \mathcal{X} \). We can demonstrate this based on equations \eqref{eq. Delta V}, \eqref{eq. f(xi)} with  \( -Q_{\lambda} < -W < 0 \) and using Schur complements \citep{boyd2004convex} as follows
\begin{align}
    \Delta V(\xi) &\le -(\xi - Q_\lambda^{-1} \varrho_\lambda)^\top Q_\lambda (\xi - Q_\lambda^{-1} \varrho_\lambda) + c_\lambda + \varrho_\lambda^\top Q_\lambda^{-1} \varrho_\lambda \nonumber\\
    &<  -(\xi - Q_\lambda^{-1} \varrho_\lambda)^\top W (\xi - Q_\lambda^{-1} \varrho_\lambda) + c_\lambda + \varrho_\lambda^\top Q_\lambda^{-1} \varrho_\lambda \nonumber \\
    &< 0, \, \forall \xi \notin \mathcal{X}, \nonumber \\
    \Leftarrow \quad &c_\lambda + \varrho_\lambda^\top Q_\lambda^{-1} \varrho_\lambda < 1 \iff \begin{bmatrix}
        Q_\lambda & \varrho_\lambda \\
        \varrho_\lambda^\top & 1-c_\lambda
    \end{bmatrix} > 0, \forall A_\pi \in [\underbar{A}_\pi, \overbar{A}_\pi], \forall L_\pi \in [\underbar{L}_\pi, \overbar{L}_\pi] .\label{eq. c lambda}
\end{align}
In addition, the third constraints in \eqref{eq. rbst cvx prgrm} ensures that origin is inside the set \eqref{eq. X set}. We can show this using Schur complement \citep{boyd2004convex} as follows
\begin{align}
    0 \in \mathcal{X} \iff  &(0 - Q_\lambda^{-1} \varrho_\lambda)^\top W (0 - Q_\lambda^{-1} \varrho_\lambda) < (0 - Q_\lambda^{-1} \varrho_\lambda)^\top Q_\lambda (0 - Q_\lambda^{-1} \varrho_\lambda) < 1 \nonumber \\
    \Leftarrow \,\,\, &\varrho_\lambda^\top Q_\lambda^{-1} \varrho_\lambda < 1 \iff \begin{bmatrix}
        Q_\lambda & \varrho_\lambda \\
        \varrho_\lambda^\top & 1
    \end{bmatrix} > 0, \forall A_\pi \in [\underbar{A}_\pi, \overbar{A}_\pi], \forall L_\pi \in [\underbar{L}_\pi, \overbar{L}_\pi]. \label{eq. ro lambda quadratic}
\end{align}
The constraints \eqref{eq. c lambda} and \eqref{eq. ro lambda quadratic} implies that $c_\lambda \in (0,1)$ as we know that $0<\varrho_\lambda^\top Q_\lambda^{-1} \varrho_\lambda < 1$. 
The solution to the problem in \eqref{eq. rbst cvx prgrm} ensures the minimization of the volume of the set \(\mathcal{X}\) defined in \eqref{eq. X set}. This volume is inversely proportional to \(\sqrt{\operatorname{det}(W)}\) \citep{boyd2004convex}. Furthermore, we have \(\hat{\xi} = -{F}^{-1}{h} \in \operatorname{int}(\mathcal{X})\). From \eqref{eq. lyapunov func} and \eqref{eq. f(xi)}, it follows ${f(\hat{\xi})=\sum_{\pi \in \mathbb{M}^{\prime}} \lambda_{\pi} V\left(A_{\pi} \hat{\xi}+L_{\pi}\right)>0}$, being $V(\hat{\xi})=0$ and $V(\xi)>0, \forall \xi \neq \hat{\xi}$. From \eqref{eq. Delta V} and \eqref{eq. c lambda}, we have
\begin{align*}
    &f(\hat{\xi}) > 0 \, \Rightarrow -(\hat{\xi} - Q_\lambda^{-1} \varrho_\lambda)^\top Q_\lambda (\hat{\xi} - Q_\lambda^{-1} \varrho_\lambda) < c_\lambda + \varrho_\lambda^\top Q_\lambda^{-1} \varrho_\lambda < 1.
\end{align*}
Finally, from the first constraint in \eqref{eq. rbst cvx prgrm}, it implies that
\begin{equation*}
    -(\hat{\xi} - Q_\lambda^{-1} \varrho_\lambda)^\top Q_\lambda (\hat{\xi} - Q_\lambda^{-1} \varrho_\lambda) < -(\hat{\xi} - Q_\lambda^{-1} \varrho_\lambda)^\top W (\hat{\xi} - Q_\lambda^{-1} \varrho_\lambda) < 1,
\end{equation*}
which concludes the proof.
\end{proof}
\begin{remark}
    In Theorem \ref{thm. set of attraction}, the set $\mathcal{X}$ in \eqref{eq. X set} is an ellipsoid centered at $\xi_o = Q_\lambda^{-1} \varrho_\lambda$, with $Q_\lambda = \Sigma_\pi \lambda_\pi(F -  A_\pi^\top F A_\pi)$ and $\varrho_\lambda = \Sigma_\pi \lambda_\pi(A_\pi^\top h + A_\pi^\top F L_\pi -h)$. Since the constraints in the optimization problem \eqref{eq. rbst cvx prgrm} will be satisfied for all $A_\pi \in [\underbar{A}_\pi, \overbar{A}_\pi]$ and $L_\pi  \in [\underbar{L}_\pi, \overbar{L}_\pi]$, for any $A_\pi$ and $L_\pi$ within the specified intervals, there exists a center $\xi_o$ that defines the set $\mathcal{X}$ to satisfy the given constraints.
\end{remark}
Theorem \ref{thm. set of attraction} helps us identify a set of attraction that meets the first two conditions in Definition~\ref{def. roa} by addressing the optimization problem in \eqref{eq. rbst cvx prgrm}. To establish an invariant set of attraction, we must identify additional conditions to satisfy the final criterion in Definition \ref{def. roa}. Thus, the following theorem is presented. 
\begin{theorem}
Let matrices $F > 0$, $W > 0$, and the vector
$h$ denote the solutions to the optimization problem defined in \eqref{eq. rbst cvx prgrm}.
The solution ${\rho}$ of the following robust convex program
\begin{align} \label{eq. opt ro}
    \boldsymbol{RCP}^{\rom{2}}\!\!:\!\!\!\!\! \inf_{\rho>0,\tau>0} & \!\!\left\{\!\! \rho\!:\!\!\!   \begin{bmatrix}
        \rho+\tau \xi_o^\top W \xi_o-\tau \!&\! h^\top \!&\! -\tau \xi_o^\top W \\
        * \!&\! {F} \!&\! {F} \\
        * \!&\! * \!&\! \tau W
    \end{bmatrix}\!\!  >\! 0, \forall A_\pi, \, \forall L_\pi \in [\underbar{A}_\pi, \overbar{A}_\pi],\, [\underbar{L}_\pi, \overbar{L}_\pi]\!\!\right\}\!,
\end{align}
with $\xi_o = Q_\lambda^{-1} \varrho_\lambda$, where $Q_\lambda = \Sigma_\pi \lambda_\pi(F -  A_\pi^\top F A_\pi)$ and $\varrho_\lambda = \Sigma_\pi \lambda_\pi(A_\pi^\top h + A_\pi^\top F L_\pi -h)$, provides the ellipsoidal invariant set of attraction $\Omega(\rho) \subseteq \mathcal{X}$ of minimum volume, expressed according to \eqref{eq. roa} and satisfying all the three conditions of Definition \ref{def. roa}. 
\label{thm. ro opt}
\end{theorem}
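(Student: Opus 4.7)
The plan is to reduce this theorem, which concerns the minimum-volume sublevel set of the already-fixed quadratic Lyapunov function from Theorem~\ref{thm. set of attraction}, to an ellipsoid-containment problem between $\Omega(\rho)$ and $\mathcal{X}$, and then to cast that containment as the robust LMI in \eqref{eq. opt ro} via the S-procedure combined with Schur complements.

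First I would observe that conditions (i) and (ii) of Definition~\ref{def. roa} are essentially inherited from Theorem~\ref{thm. set of attraction} as soon as $\mathcal{X}\subseteq\Omega(\rho)$: the origin lies in $\mathcal{X}$ by Theorem~\ref{thm. set of attraction}, and therefore in $\Omega(\rho)$; and any $\xi\notin\Omega(\rho)$ is automatically outside $\mathcal{X}$, which gives $\Delta V(\xi)<0$. Invariance~(iii) on $\Omega(\rho)\setminus\mathcal{X}$ is immediate since $V(\xi_{k+1})<V(\xi_k)\le\rho$. For $\xi_k\in\mathcal{X}$ I would invoke the convex bound $V(\xi_{k+1})\le V(\xi_k)+f(\xi_k)$ from \eqref{eq. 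Delta V}--\eqref{eq. f(xi)} and use the scalar $\tau$ appearing in \eqref{eq. opt ro} to couple the one-step growth of $V$ to the level $\rho$, so that $V(\xi_{k+1})\le\rho$ holds at the worst point of $\mathcal{X}$. This reduces the theorem to extracting the smallest $\rho$ for which the ellipsoid inclusion is certified by an LMI.

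Next I would derive the LMI itself. Writing the inclusion as the quadratic implication $(\xi-\xi_o)^\top W(\xi-\xi_o)\le 1\Rightarrow V(\xi)\le\rho$ and applying the (lossless) S-procedure for a single quadratic constraint introduces a scalar multiplier $\tau>0$ and yields a $2\times 2$ symmetric block inequality whose scalar entry carries the Lyapunov constant $d=h^\top F^{-1}h$. The key manoeuvre is to eliminate $F^{-1}$ by lifting to a $3\times 3$ block whose bottom-right entry is $\tau W\succ 0$: a Schur complement with respect to $\tau W$ absorbs the off-diagonal $-\tau\xi_o^\top W$ and produces the cancellation of $\tau\xi_o^\top W\xi_o$ in the $(1,1)$ entry together with the off-diagonal $h^\top+\xi_o^\top F$, while a second Schur complement with respect to $F$ recovers the original $F^{-1}$-dependent scalar of the S-procedure. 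Reading this chain in reverse gives exactly \eqref{eq. opt ro}. Robustness over $A_\pi\in[\underbar{A}_\pi,\overbar{A}_\pi]$ and $L_\pi\in[\underbar{L}_\pi,\overbar{L}_\pi]$ then follows because $\xi_o$, $Q_\lambda$, and $\varrho_\lambda$ are defined in terms of the uncertain matrices and the LMI is required to hold uniformly. Finally, since the volume of $\Omega(\rho)$ scales as $\rho^{n/2}/\sqrt{\det F}$ with $F$ fixed, minimising $\rho$ minimises the volume.

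The main obstacle I expect is this second step: matching the precise block pattern of \eqref{eq. opt ro}---the repeated $F$ in the $(2,3)$ position and the specific sign of $-\tau\xi_o^\top W$---to the S-procedure inequality, which demands careful bookkeeping through each Schur-complement reduction under the substitution $d=h^\top F^{-1}h$. A secondary subtlety is justifying invariance on $\mathcal{X}$ itself: one has to argue that the same LMI simultaneously certifies $\mathcal{X}\subseteq\Omega(\rho)$ and the one-step growth bound $V(\xi_{k+1})\le\rho$ for every $\xi_k\in\mathcal{X}$, using the role of $\tau$ as a Lagrange multiplier that transfers the slack $c_\lambda+\varrho_\lambda^\top Q_\lambda^{-1}\varrho_\lambda<1$ established in Theorem~\ref{thm. set of attraction} into an increment budget on $V$; otherwise the minimum $\rho$ delivered by \eqref{eq. opt ro} would only ensure attraction, not strict invariance.
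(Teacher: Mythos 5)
Your overall route is the same as the paper's: fix $V$ from Theorem~\ref{thm. set of attraction}, characterize $\rho$ as $\max_{\xi\in\mathcal{X}}V(\xi)$ (equivalently the implication $(\xi-\xi_o)^\top W(\xi-\xi_o)\le 1\Rightarrow V(\xi)\le\rho$), apply the S-procedure with multiplier $\tau$, and recover the $3\times 3$ LMI of \eqref{eq. opt ro} by a Schur complement that eliminates the $h^\top F^{-1}h$ term; minimizing $\rho$ with $F$ fixed then minimizes the volume. The paper runs the same calculation in the forward direction (the LMI forces $F-\tau W<0$, hence the S-procedure quadratic $g$ is strictly concave, and the LMI further gives $g(\xi^*)<0$ at its maximizer), and your reading of the inclusion as $\mathcal{X}\subseteq\Omega(\rho)$ --- despite the statement's ``$\Omega(\rho)\subseteq\mathcal{X}$'' --- matches what the proof actually establishes.

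The one step that would fail as written is your treatment of invariance for $\xi_k\in\mathcal{X}$. The LMI in \eqref{eq. opt ro} only certifies the static containment $V(\xi)\le\rho$ for all $\xi\in\mathcal{X}$; the multiplier $\tau$ is purely the S-procedure certificate for that containment and carries no information about the one-step increment, so it cannot be used to ``couple the one-step growth of $V$ to the level $\rho$.'' Indeed, combining \eqref{eq. Delta V} with \eqref{eq. c lambda}, all one obtains for $\xi_k\in\mathcal{X}$ is $V(\xi_{k+1})\le V(\xi_k)+c_\lambda+\varrho_\lambda^\top Q_\lambda^{-1}\varrho_\lambda< V(\xi_k)+1$, which does not by itself yield $V(\xi_{k+1})\le\rho$. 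The paper does not supply this argument either --- it simply asserts that the level sets of the valid Lyapunov function are invariant, deferring to \citep{deaecto2016stability} --- so you have correctly flagged the genuine subtlety in condition (iii) of Definition~\ref{def. roa}, but the mechanism you propose for closing it is not one that \eqref{eq. opt ro} provides.
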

\begin{proof}
We draw inspiration from the steps taken in \citep{deaecto2016stability}.
    Let us consider the Lyapunov function \eqref{eq. lyapunov func} with $F$, $h$, and $d$, calculated by applying Theorem~\ref{thm. set of attraction}. To determine the invariant set of attraction with the minimum volume containing $\mathcal{X}$, we consider the following condition for all $A_\pi \in [\underbar{A}_\pi, \overbar{A}_\pi]$ and $L_\pi \in [\underbar{L}_\pi, \overbar{L}_\pi]$
\begin{align}
\rho&=  \max _{\xi \in \mathcal{X}} V(\xi)=\inf _{\rho>0}\{\rho: V(\xi)<\rho, \forall \xi \in \operatorname{int}(\mathcal{X})\} \nonumber \\
&= \inf _{\rho>0}\left\{\rho: d-\rho+2 h^\top \xi + \xi^\top F \xi <0, \forall \xi \in \R^{n}, \,\st \, -1 + (\xi-\xi_o)^\top W (\xi-\xi_o) <0\right\}\!, \label{eq. ro 1}
\end{align}
where $\mathcal{X}$ is the set given in \eqref{eq. X set}. By applying the $\mathcal{S}$-procedure \citep{boyd2004convex}, \eqref{eq. ro 1} becomes as follows
\begin{align}
\rho\!= \!\!\!\!\! \inf _{\substack{\rho>0, \tau>0\\
}} \!\{\! &\rho\!:\! h^\top (F)^{-1} h+\tau - \tau \xi_o^\top W \xi_o +2 {(h\!+\!\tau W \xi_o)^\top} \xi + {\xi^\top}(F\!-\!\tau W) \xi \!<\! \rho, \forall \xi \in \mathbb{R}^{n}\} . \label{eq. ro 2}
\end{align}
The proof boils down to showing that the problem in \eqref{eq. ro 2} is feasible. From the constraint of the optimization problem \eqref{eq. opt ro}, and through the Schur complements, it follows
\begin{align*}
    \begin{bmatrix}
        F & F \\
        F & \tau W
    \end{bmatrix} > 0 \iff F-\tau W<0,
\end{align*}
thus the quadratic function in \eqref{eq. ro 2}, say $g(\xi)=h^\top F^{-1} h+\tau - \tau \xi_o^\top W \xi_o - \rho +2 \transpose{(h+\tau W \xi_o)} 
\xi + \xi^\top(F-\tau W) \xi$, is strictly concave. Its extreme point occurs for $\xi^*=-(F-\tau W)^{-1} (h + \tau W \xi_o)$, where the function attains its maximum value, $g(\xi^*)= - \transpose{(h+\tau W \xi_o)} (F-\tau W) ^ {-1} (h+\tau W \xi_o) + h^\top F^{-1} h+\tau - \tau {\xi_o^\top} W \xi_o -\rho$. From the constraint of the optimization problem \eqref{eq. opt ro}, it is also
\begin{align*}
    \begin{bmatrix}
        (\rho+\tau {\xi_o^\top}{W}{\xi}_o-\tau-\transpose{(h)} (F)^{-1} h) & -\transpose{(h+\tau W {\xi}_o)} \\
-{(h+\tau W {\xi}_o)} & -(F-\tau W)
    \end{bmatrix} > 0,
\end{align*}
which, through simple Schur complement calculations, implies that
\begin{align*}
    g({{\xi}}^*)= & - \transpose{(h+\tau W {\xi}_o)} (F-\tau W) ^ {-1} (h+\tau W {\xi}_o) + \transpose{h} F^{-1} h+\tau - \tau \transpose{{\xi}_o} W {\xi}_o  - \rho < 0.
\end{align*}
As a result, it is $g({{\xi}})<0, \forall {\xi} \in \mathbb{R}^{n}, \forall A_\pi \in [\underbar{A}_\pi, \overbar{A}_\pi]$, and $\forall L_\pi \in [\underbar{L}_\pi, \overbar{L}_\pi]$, and the problem in \eqref{eq. ro 2} admits a solution.
Finally, it is easy to verify that $\Omega(\rho)$ is an invariant set of attraction, i.e., it satisfies all the conditions in Definition \ref{def. roa}, being a $\mathcal{X} \subseteq \Omega(\rho)$ and $V(\xi)$, determined by applying Theorem \ref{thm. ro opt}, is a valid Lyapunov function for the system \eqref{eq. error dy} (i.e., its level sets are invariant).
\end{proof}

Here, the main challenge is to solve two robust convex programming problems outlined in \eqref{eq. rbst cvx prgrm} and \eqref{eq. opt ro}, which will be addressed in Section \ref{subsec scenario}.
\begin{remark}
The optimization problems in equations \eqref{eq. rbst cvx prgrm} and \eqref{eq. opt ro} define a robust convex program with an infinite number of Linear Matrix Inequality (LMI) constraints, as the inequalities must hold for all admissible parameters \( A_\pi \) and \( L_\pi \) within their specified bounds. To simplify this problem, we will use the scenario approach in the next section, which approximates the original problem by applying constraints to a finite set of sampled scenarios. 
\end{remark}
\subsection{Scenario Optimization to Design the Invariant Set of Attraction  \label{subsec scenario}}
This section addresses proposed robust convex programming problems through a data-driven wait-and-judge scenario approach \citep{campi2018wait, CAMPI20211}, where the robust optimization problems in \eqref{eq. rbst cvx prgrm} and \eqref{eq. opt ro} can be solved with a certain confidence by taking random samples from the uncertain variables.

The optimization problem \eqref{eq. rbst cvx prgrm} can be reformulated as a convex scenario program. We consider a uniform distribution over the space $\mathcal{W} = [\underbar{A}_\pi, \overbar{A}_\pi] \times [\underbar{L}_\pi, \overbar{L}_\pi]$ and obtain $N$ sampled scenarios 
$\omega^i = (A_\pi,L_\pi)^i = (A_\pi^i,L_\pi^i)$, $i\in\{1,2,\ldots, N\}$. Then, the optimization \eqref{eq. rbst cvx prgrm} can be approximated as 
\begin{align}
   \boldsymbol{SP_N^{\rom{1}}\!:\!} \inf_{z} \Biggl\{\!-\!\ln(\det(W))\!:\! &\underbrace{Q_\lambda^i \!>\! W, \begin{bmatrix}
        Q_\lambda^i & \varrho_\lambda^i \\
        (\varrho_\lambda^i)^\top & 1-c_\lambda^i
    \end{bmatrix} \!>\!
    0,\!  \begin{bmatrix}
        Q_\lambda^i & \varrho_\lambda^i \\
        (\varrho_\lambda^i)^\top & 1
    \end{bmatrix} \!>\!
    0, F\!>\!0, W\!>\!0,}_{\text{Set of Constraints } \mathcal{Z}}
    \forall i\!\in\!\N_N\! \Biggl\}, \label{eq. rbst cvx prgrm scenario}
\end{align}
where $Q_\lambda^i = \Sigma_\pi \lambda_\pi(F -  (A_\pi^i)^\top F A_\pi^i)$, $\varrho_\lambda^i = \Sigma_\pi \lambda_\pi((A_\pi^i)^\top h + (A_\pi^i)^\top F L_\pi^i -h)$, $c_\lambda^i = \Sigma_\pi \lambda_\pi(2h^\top L_\pi^i + (L_\pi^i)^\top F L_\pi^i)$ and $z = \{F, W, h\}$ is the set of variables. Denote by $\mathcal V_1(z)$ the constraint violation probability with respect to the probability measure put on the uncertainty space $\mathcal W$ when the optimization variables are set to $z$. Let \( z_N^* \) represent the optimal solution to equation \eqref{eq. rbst cvx prgrm scenario}. 
The solution is considered robust at the robustness level \( \varepsilon_1 \) if \( \mathcal{V}_1(z_N^*) \leq \varepsilon_1 \).

Similarly, we approximate the optimization \eqref{eq. opt ro} with a convex scenario program. Consider the uniform distribution over \(\mathcal{W}\) and obtain \(M\) new sampled scenarios \(w^j = (A_\pi,L_\pi)^j = (A_\pi^j,L_\pi^j)\), $j\in\{1,2,\ldots,M\}$. The optimization \eqref{eq. rbst cvx prgrm scenario} can then be approximated as
\begin{align}
\label{eq. opt ro scenario}
     \boldsymbol{SP_M^{\rom{2}}\!: }\! \inf_{{y}} & \left\{\! \rho\!:\!\!   \underbrace{\begin{bmatrix}
        \rho+\tau (\xi_o^j)^\top W \xi_o^j-\tau & h^\top & -\tau (\xi_o^j)^\top W \\
        {h} & {F} & {F} \\
        -\tau W (\xi_o^j)^\top & F & \tau W
    \end{bmatrix}\!\!>\! 0, \rho>\!0, \tau>\!0,}_{\text{Set of Constraints } \mathcal{Y}} \forall j \in \mathbb{N}_M\right\}, 
\end{align}
with $\xi_o^j = (Q_\lambda^j)^{-1} \varrho_\lambda^j$, where $Q_\lambda^j = \Sigma_\pi \lambda_\pi(F -  (A_\pi^j)^\top F A_\pi^j)$, $\varrho_\lambda^j = \Sigma_\pi \lambda_\pi((A_\pi^j)^\top h + (A_\pi^j)^\top F L_\pi^j -h)$ and $y = \{\rho, \tau\}$ is the set of variables. Denote by $\mathcal V_2(y)$ the constraint violation probability with respect to the probability measure put on the uncertainty space $\mathcal W$ when the optimization variables are set to $y$. Let \( y_M^* \) represent the optimal solution to equation \eqref{eq. opt ro scenario}. If \( \mathcal{V}_2(y_M^*) \leq \varepsilon_2 \), the solution is considered robust at the robustness level \( \varepsilon_2 \).

A scenario \(\omega\) is considered a \textbf{\emph{support scenario}} if its removal changes the solution to optimization problem. Denote by \(s_{1N}^*\) and \(s_{2M}^*\) the number of support scenarios of \eqref{eq. rbst cvx prgrm scenario} and \eqref{eq. opt ro scenario}, respectively.
Following the \emph{wait-and-judge} paradigm described by \cite{campi2018wait}, we have the following theorem, which describes how to compute robustness levels \( \varepsilon_1, \varepsilon_2 \) with a certain confidence by identifying the number of support scenarios.
\begin{theorem}
\label{thm. conf total}
Assume the optimizations in \eqref{eq. rbst cvx prgrm} and \eqref{eq. opt ro} are feasible.   For a given confidence value $\beta\in(0,1)$, we have that 
    $$\mathbf{P}^{N+M}\left\{\mathcal{V}_1\left(z_{N}^{*}\right)>\varepsilon_1\left(s_{1N}^{*}\right) \text{ and }\mathcal{V}_2\left(y_{M}^{*}\right)>\varepsilon_2\left(s_{2M}^{*}\right)\right\} \leq \beta,$$
    where $\varepsilon_i:=1-\iota_i$, $i=1,2$, and $\iota_1,\iota_2$ are the unique solutions of the following equation for $\iota$
    \begin{equation*}
    \frac{\eta}{N_c+1} \sum_{q=k}^{N_c}\binom{q}{k} \iota^{q-k}-\binom{N_c}{k} \iota^{N_c-k}=0,
\end{equation*}
by replacing $(\eta,N_c,k)$ respectively with $(\beta_1,N,s_{1N}^*)$ and $(\beta_2,M,s_{2M}^*)$, and where $\beta = \beta_1+\beta_2$.
\end{theorem}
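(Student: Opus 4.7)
The proof plan is to invoke the wait-and-judge scenario theorem of \cite{campi2018wait} separately on each of the two convex scenario programs, and then stitch the two resulting probabilistic guarantees together with a Bonferroni-type union bound on the product probability space of the $N+M$ independent samples.

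First I would apply the wait-and-judge result to the convex scenario program \eqref{eq. rbst cvx prgrm scenario}. That theorem asserts that, given a feasible convex scenario program with $N$ i.i.d.\ samples whose optimizer has $s^*$ support scenarios, for any $\beta_1 \in (0,1)$ the event that the violation probability exceeds $1-\iota_1(s^*)$ has probability at most $\beta_1$, where $\iota_1(k) \in (0,1)$ is the unique root of the polynomial equation displayed in the theorem statement with $(\eta,N_c,k) = (\beta_1,N,k)$. Since the constraints in \eqref{eq. rbst cvx prgrm scenario} are LMIs and therefore convex in $z = (F,W,h)$, and feasibility of \eqref{eq. rbst cvx prgrm} is assumed, the hypotheses of \cite{campi2018wait} are satisfied, giving
\[
\mathbf{P}^{N}\bigl\{\mathcal{V}_1(z_N^*) > \varepsilon_1(s_{1N}^*)\bigr\} \le \beta_1.
\]
Applying the identical argument to \eqref{eq. opt ro scenario}, which is convex in $y = (\rho,\tau)$ and whose support-scenario count is $s_{2M}^*$, yields
\[
\mathbf{P}^{M}\bigl\{\mathcal{V}_2(y_M^*) > \varepsilon_2(s_{2M}^*)\bigr\} \le \beta_2.
\]

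Next I would observe that the two collections of scenarios are drawn independently from $\mathcal{W}$, so the joint law is the product measure $\mathbf{P}^{N+M} = \mathbf{P}^N \otimes \mathbf{P}^M$. A one-line union bound then gives
\[
\mathbf{P}^{N+M}\bigl\{\mathcal{V}_1(z_N^*) > \varepsilon_1(s_{1N}^*) \text{ or } \mathcal{V}_2(y_M^*) > \varepsilon_2(s_{2M}^*)\bigr\} \le \beta_1 + \beta_2 = \beta,
\]
and since the simultaneous (\emph{and}) event in the theorem statement is contained in this union, the claimed bound follows immediately. The main obstacle, such as it is, lies not in the combination step but in verifying the three preconditions for the wait-and-judge theorem on each individual program, namely convexity of the constraint sets $\mathcal{Z}$ and $\mathcal{Y}$ (transparent from the LMI formulation), i.i.d.\ sampling from a prescribed distribution on $\mathcal{W}$ (built in via the uniform sampling scheme), and feasibility with a well-defined notion of support scenarios (granted by the theorem's assumption that both \eqref{eq. rbst cvx prgrm} and \eqref{eq. opt ro} are feasible). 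Once these are in place, the remainder is a routine union-bound argument respecting the two confidence budgets $\beta_1$ and $\beta_2$ with $\beta_1+\beta_2 = \beta$.
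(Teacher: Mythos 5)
Your proposal is correct and follows essentially the same route as the paper: apply the wait-and-judge bound of \citet{campi2018wait} to each scenario program separately to obtain $\mathbf{P}^{N}\{\mathcal{V}_1(z_N^*) > \varepsilon_1(s_{1N}^*)\} \le \beta_1$ and $\mathbf{P}^{M}\{\mathcal{V}_2(y_M^*) > \varepsilon_2(s_{2M}^*)\} \le \beta_2$, then combine them on the product space via Boole's inequality. The paper's proof is exactly this two-line argument, so your additional verification of the convexity, i.i.d.\ sampling, and feasibility preconditions only makes the argument more explicit.
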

\begin{proof}
Following the approach of \cite{campi2018wait}, we get the two inequalities $\mathbf{P}^{N}\left\{\mathcal{V}_1\left(z_{N}^{*}\right)>\varepsilon_1\left(s_{1N}^{*}\right)\right\} \leq \beta_1$  and
    $\mathbf{P}^{M}\left\{\mathcal{V}_2\left(y_{M}^{*}\right)>\varepsilon_2\left(s_{2M}^{*}\right)\right\} \leq \beta_2$. These two inequalities can then be combined on their product space by applying Bool's inequality to the complement of the involved events \citep{galambos1977bonferroni}.
\end{proof}
 
Note that if the robust convex programming problems in \eqref{eq. rbst cvx prgrm} and \eqref{eq. opt ro} are feasible, then the corresponding scenario programs in \eqref{eq. rbst cvx prgrm scenario} and \eqref{eq. opt ro scenario} will also be feasible. The converse is not necessarily true due to the reduction in the number of constraints. The above theorem enables us to have a trade-off between the two confidence values $\beta_1,\beta_2$ attributed to the two constraint violation probabilities. The results of Theorems \ref{thm. set of attraction}, \ref{thm. ro opt} and \ref{thm. conf total}, along with their scenario programming reformulation in this section, are combined into Algorithm \ref{alg} to design a robust switching control law $\pi$.

\begin{algorithm}[t!]
    \caption{Robust Switching Control Law Design}\label{alg}
    \vspace{-4mm}
    \begin{multicols}{2}
    \begin{algorithmic}[1]
        \small\REQUIRE Interval uncertainty $[\underbar{A}_\pi,\overbar{A}_\pi]$ and $[\underbar{B}_\pi,\overbar{B}_\pi]$ for system matrices in \eqref{eq. dt SAS}, confidence level $\beta$
        \STATE Define $\f$ as in \eqref{eq. operating points} based on the chosen $\hat{A}_\pi, \hat{B}_\pi \in [\underbar{A}_\pi,\overbar{A}_\pi], [\underbar{B}_\pi,\overbar{B}_\pi]$  
        \STATE Compute $x_{e}' := \argmin_{x'} \{\|x' - x\|, x\in\f\}$
        \STATE Define the confidence level for each optimization of $\boldsymbol{SP_N^{ \rom{1}}}$ and $\boldsymbol{SP_M^{ \rom{2}}}$ in a way that $\beta_{{1}} + \beta_{{2}} = \beta$.
        \STATE Decide on number of samples $N$ based on $\beta_{1}$ and Theorem \ref{thm. conf total}
        \STATE Solve $\boldsymbol{SP_N^{\rom{1}}}$ in \eqref{eq. rbst cvx prgrm scenario} to get $F$, $W$, and $h$
        \STATE Set $F$, $h$, and $d = h^\top F^{-1} h$ in Lyapunov function \eqref{eq. lyapunov func} and get $V(\xi)$
        \STATE Select the number of samples $M$ using $\beta_{{2}}$ and Theorem \ref{thm. conf total}
        \STATE Solve $\boldsymbol{SP_M^{\rom{2}}}$ in \eqref{eq. opt ro scenario} to get $\rho$
        \STATE Set $\rho$ in \eqref{eq. roa} and get invariant set of attraction $\Omega(\rho)$
        \STATE Compute $\mathcal{G}$ as the smallest set such that $\Omega + x_{e}'\subset \mathcal{G}$ and $x_{e}\in\mathcal{G}$
        \STATE Set $x_0 = 0 $ and $\xi_0=x_0-x_e$\\
        \STATE\For{$k\in\N$}
        {Compute ${\pi}_k(E_k)$ using \eqref{eq. switch law}\;
        Update $\xi_{k+1}$ using \eqref{eq. error dy} and by applying the upper-bound and lower-bound of uncertainty}
        \ENSURE Set $\mathcal{G}$ and switching control law ${\pi}$ 
    \end{algorithmic}
    \end{multicols}
    \vspace{-3mm}
\end{algorithm}

\section{Case studies} \label{sec app}
Switched affine systems with interval uncertainty have various applications. This study focuses on two applications: MOIMDPs and power electronic systems. We also compare the quality of the solution on a numerical example with an alternative approach from the literature. 
In all the examples provided, the optimization conditions are verified using the \textsf{CVX} package in \textsf{MATLAB} \citep{cvx}. The computations are performed on a \textsf{MacBook} with the \textsf{M2~chip} and \textsf{16GB} of memory.

\subsection{MOIMDPs}
Designing policies to optimize multiple objectives in IMDPs can be modeled with uncertain switched affine systems \citep{monir2024lyapunov}, which is performed through value iteration algorithms. The values $W$ that give the objectives as a function of state satisfy
$W_{k+1}= A_{\pi} W_{k} + B_{\pi}$, where $\pi$ is the policy at time $k$, matrices $A_{\pi}$ depend on the transition probabilities, and matrices $B_{\pi}$ contain the objectives as a function of states. 
The target values $W_{tar} := \begin{bmatrix} w_{tar}^1; w_{tar}^2; \ldots; w_{tar}^q \end{bmatrix}$ for all objectives can be selected from the desired operating points presented in \eqref{eq. operating points}.

Figure~\ref{fig: imdp ex} shows an example of an IMDP adopted from \citep{hahn2019interval, monir2024lyapunov}, with the set of states $S = \{s, t, u\}$, the initial state $s$, and the set of actions $U = \{a,b\}$. The non-zero transition probability intervals are illustrated \citep[Fig. 5]{monir2024lyapunov}. Here, we consider two stationary policies $\pi \in \{1,2\}$, where each policy corresponds to a fixed action selection for all states and remains unchanged over time. The interval transition probability matrices are defined as
\[
\begin{medsize}
\begin{aligned}
\underbar{P}(1) &= \begin{bmatrix}
    0 & \frac{1}{3} & \frac{1}{10} \\ 
    0 & 1 & 0 \\ 
    0 & 0 & 1
\end{bmatrix}\!\!, \quad
\overbar{P}(1) = \begin{bmatrix}
    0 & \frac{2}{3} & 1 \\ 
    0 & 1 & 0 \\ 
    0 & 0 & 1
\end{bmatrix}\!\!, \quad
\underbar{P}(2) = \begin{bmatrix}
    0 & \frac{2}{5} & \frac{1}{10} \\ 
    0 & 1 & 0 \\ 
    0 & 0 & 1
\end{bmatrix}\!\!, \quad 
\overbar{P}(2) = \begin{bmatrix}
    0 & \frac{3}{5} & 1 \\ 
    0 & 1 & 0 \\ 
    0 & 0 & 1
\end{bmatrix}\!\!.
\end{aligned}
\end{medsize}
\]
\begin{figure}
    \centering
    \includegraphics[width=0.43\linewidth]{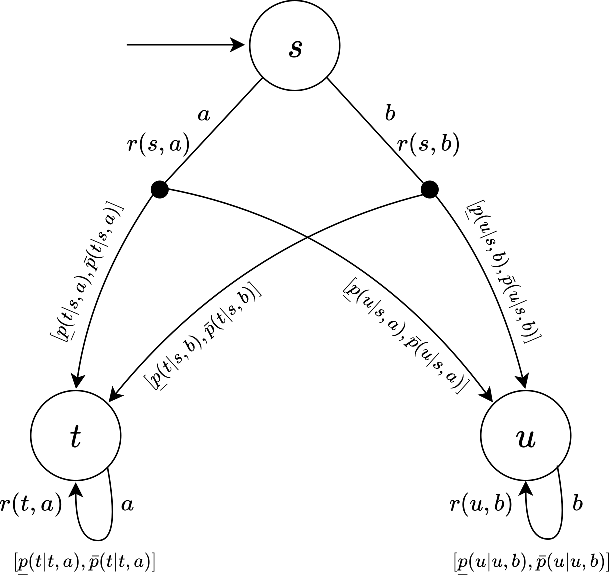}
    \caption{IMDP adopted from \citep{hahn2019interval, monir2024lyapunov}}
    \label{fig: imdp ex}
\end{figure}

The interval reward vectors are defined as $\underbar{R}(1) = \left[\frac{13}{10}; \frac{3}{10};
        \frac{1}{10}\right]$, $\overbar{R}(1) = \left[5; \frac{3}{10};
        \frac{1}{10}\right]$, $\underbar{R}(2) = \left[\frac{1}{2}; \frac{3}{10};
        \frac{1}{10}\right]$, and $\overbar{R}(2) = \left[\frac{8}{5}; \frac{3}{10};
        \frac{1}{10}\right]$. The discount factor $\gamma = 0.5$ and target $W_{tar}$ corresponding to $\lambda = {\left[\frac{9}{10};\frac{1}{10}\right]}$ and the initial values $W_0={\left[0;0;0\right]}$ are selected. 
        Note that we focus on IMDPs, where transition probabilities are represented as intervals to account for uncertainty. Each policy gives a specific action for each state, but the actual transition probabilities can vary within the specified bounds. Importantly, the upper-bound matrices do not need to be valid probability distributions and may have rows that sum to more than 1. During execution, a valid probability distribution (with rows summing to 1) is selected for each state-action pair within these bounds.

        The number of samples taken to solve scenario optimization problems are \(N = 700\) and \(M = 250\), with the calculated confidence being \(\beta = 0.05\). The results obtained via Algorithm \ref{alg} are illustrated in Figure~\ref{fig: imdp ex sim}. 
\begin{figure}[t]
    \centering
    \begin{minipage}[b]{0.43\linewidth}
        \includegraphics[width=\linewidth]{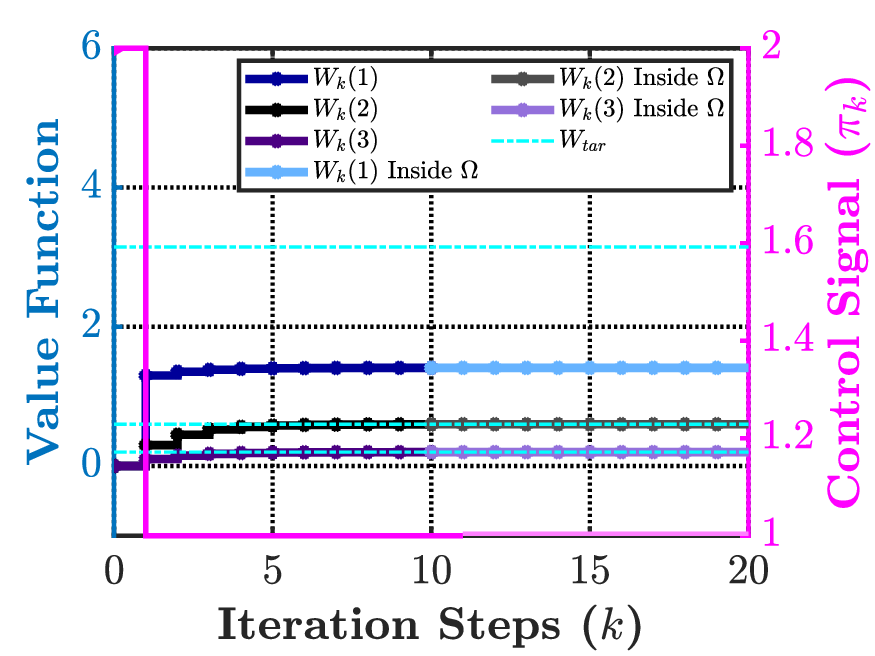}
        \centering (a)
    \end{minipage}
    \begin{minipage}[b]{0.43\linewidth}
        \includegraphics[width=\linewidth]{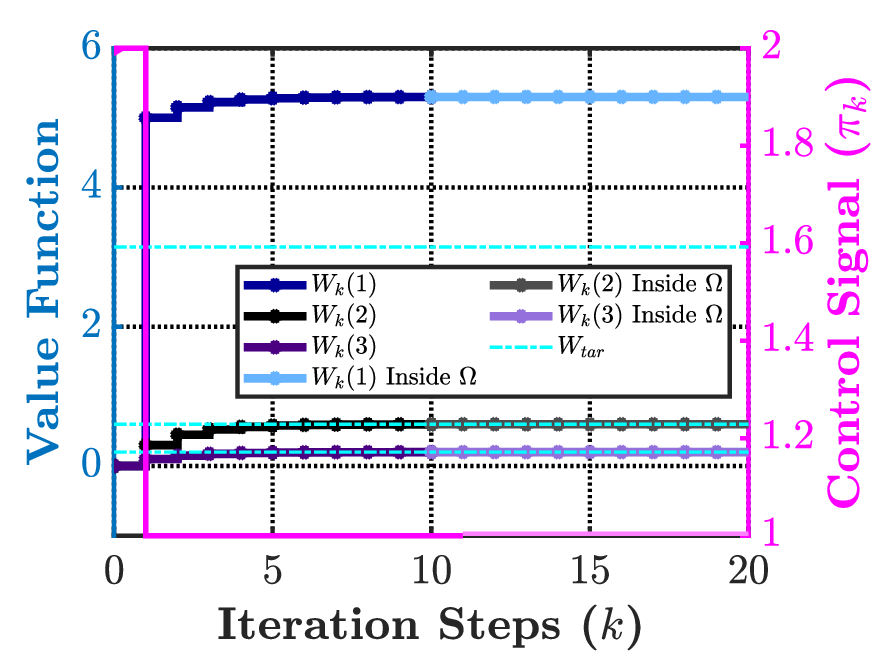}
        \centering (b)
    \end{minipage}
    \begin{minipage}[b]{0.43\linewidth}
        \includegraphics[width=\linewidth]{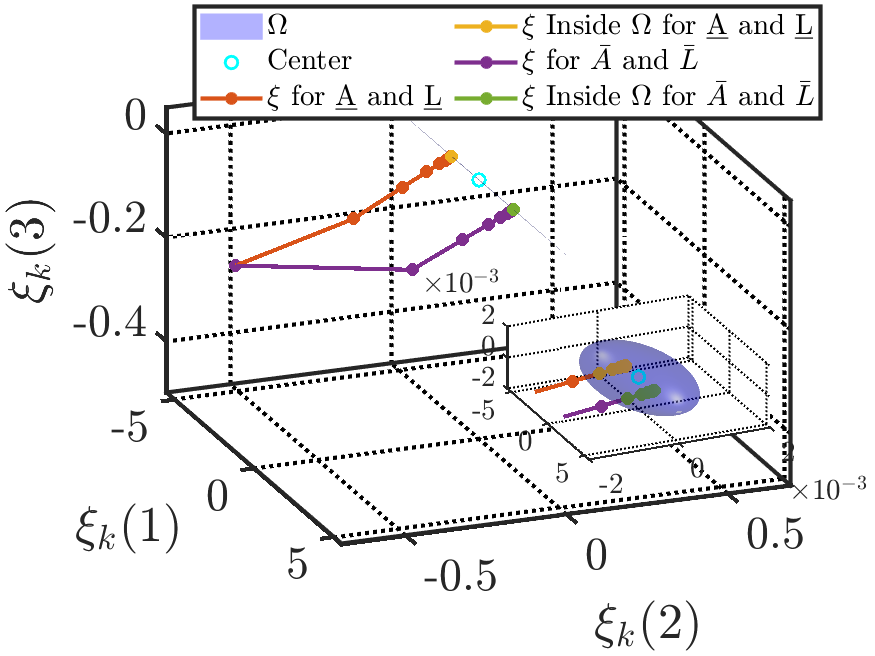}
        \centering (c)
    \end{minipage}
    \begin{minipage}[b]{0.43\linewidth}
        \includegraphics[width=\linewidth]{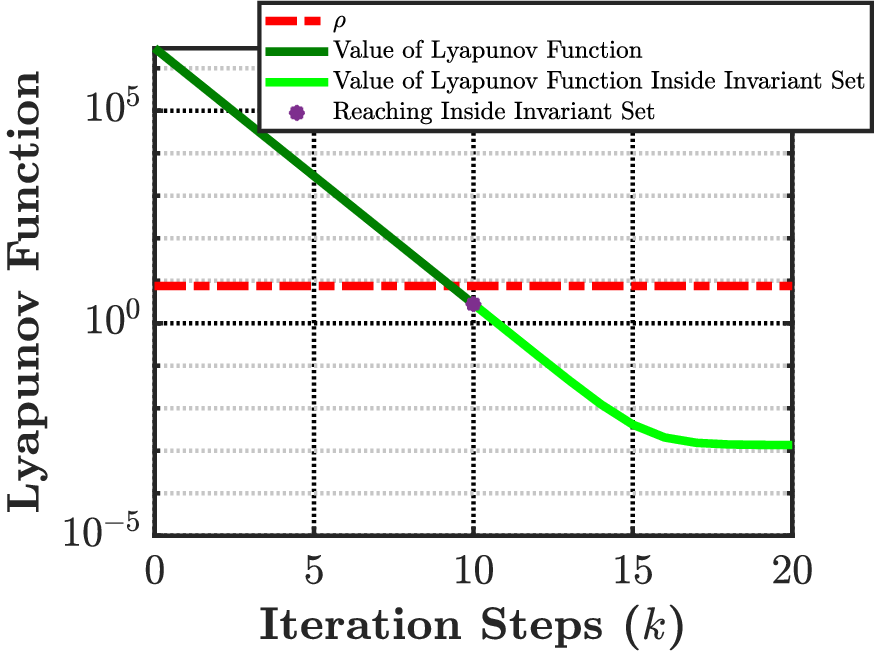}
        \centering (d)
    \end{minipage}
    \caption{The value function evolution is plotted for IMDP, considering $\underbar{A}_\pi$ and $\underbar{L}_\pi$ in (a), and $\overbar{A}_\pi$ and $\overbar{L}_\pi$ in (b). The switching law is indicated in pink in both (a) and (b). The invariant set of attraction with error trajectories for both lower- and upper-bound matrices is shown in (c). The Lyapunov function in (d) shows that the trajectories approach the invariant set of attraction and remain within the set.}
    \label{fig: imdp ex sim}
\end{figure}

To demonstrate our algorithm's applicability to MOIMDPs, we enhance the existing IMDP example by incorporating a second objective, resulting in an IMDP with two objectives. Our new objective is defined via the interval reward vectors $\underbar{R}(1) = \left[\frac{13}{30}; \frac{1}{5}; \frac{2}{5}\right]$, $\overbar{R}(1) = \left[5; \frac{1}{5}; \frac{2}{5}\right]$, $\underbar{R}(2) = \left[\frac{13}{16}; \frac{1}{5}; \frac{2}{5}\right]$, and $\overbar{R}(2) = \left[\frac{19}{12}; \frac{1}{5}; \frac{2}{5}\right]$, as well as a discount factor of $ \gamma =0.7$. Target $W_{tar}$ corresponding to $\lambda = {\left[\frac{9}{10};\frac{1}{10}\right]}$ and the initial values $W_0^1 = W_0^2={\left[0;0;0\right]}$ are selected. The number of samples in the scenario optimization problems is \(N = 1500\) and \(M = 300\), which gives the confidence \(\beta = 0.05\). The results obtained via Algorithm~\ref{alg} are illustrated in Figure~\ref{fig: moimdp}. 
\begin{figure}[h]
    \centering
    \begin{minipage}[b]{0.43\linewidth}
        \includegraphics[width=\linewidth]{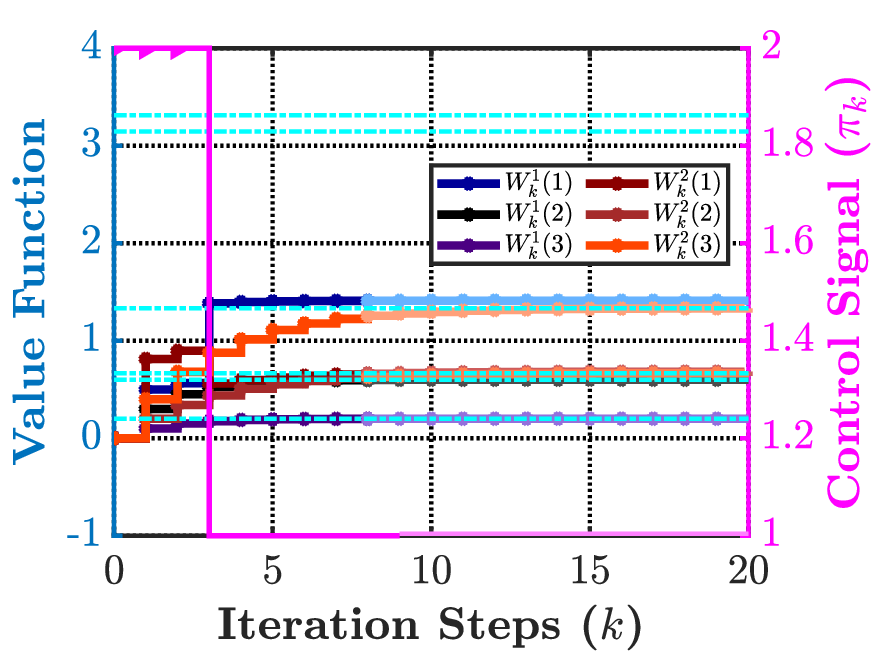}
        \centering (a)
    \end{minipage}
    \begin{minipage}[b]{0.43\linewidth}
        \includegraphics[width=\linewidth]{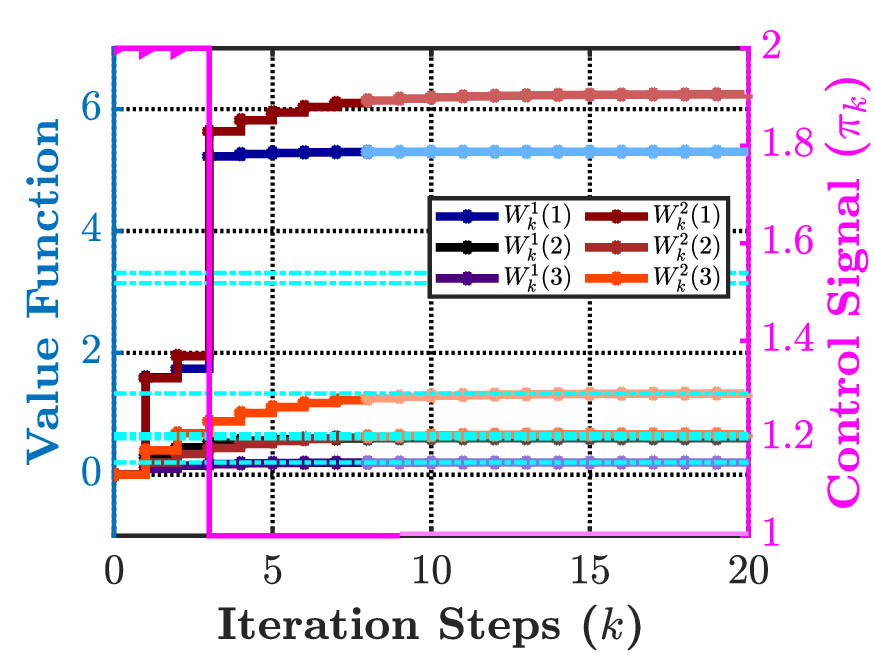}
        \centering (b)
    \end{minipage}
    \hfill
    \begin{minipage}[b]{0.43\linewidth}
        \includegraphics[width=\linewidth]{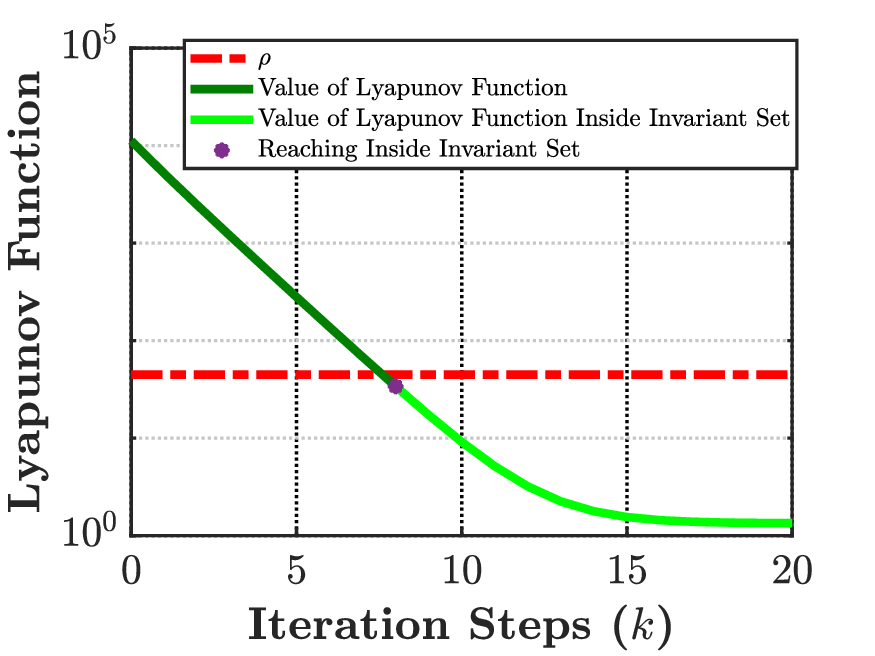}
        \centering (c)
    \end{minipage}
    \caption{The evolution of the value function for MOIMDP considering $\underbar{A}_\pi$ and $\underbar{L}_\pi$ in (a), and $\overbar{A}_\pi$ and $\overbar{L}_\pi$ in (b). The switching law is indicated in pink in both (a) and (b). The Lyapunov function in (c) shows that the trajectories approach the invariant set of attraction and remain within the set.}
    \label{fig: moimdp}
\end{figure}

\subsection{Power Electronic Systems}
The class of switched affine systems encompasses key features of various applications, including power electronic converters \citep{deaecto2010switched, albea2015hybrid, mojallizadeh2018hybrid}. Here, we applied Algorithm \ref{alg} to Single-Inductor Multiple-Output (SIMO) converter to show the effectiveness of our algorithm. The SIMO boost converter consists of three switches. Depending on the switch positions, the converter can be modeled as a switched affine system. We adopted the model and parameters from \citep{mojallizadeh2018hybrid} and discretized it using the Euler discretization method with an appropriate sampling step time of $\Delta t = 0.1$ milliseconds \citep{butcher2016numerical}. Based on model \eqref{eq. dt SAS}, the state vector is $x_k = [i^l_k;v^{o1}_k;v^{o2}_k]$, where $i^l$ is inductor current; $v^{o1}_k$ and $v^{o2}_k$ are output voltages. The system switching laws are defined as  $\pi \in \mathbb{N}_{\mathcal{P}}$ with $\mathcal{P}=3$. System matrices are
\[
\begin{medsize}
\begin{aligned}
        A_1\! =\! \begin{bmatrix}
            0 & 0 & 0 \\ 
            0 & 1-\frac{-\Delta t}{\mathcal{R}_1\mathcal{C}_1} & 0 \\
            0 & 0  & 1-\frac{-\Delta t}{\mathcal{R}_2\mathcal{C}_2}
        \end{bmatrix}\!\!,   A_2\! &=\! \begin{bmatrix}
            0 & -\frac{1}{\mathcal{L}} & 0 \\ 
            -\frac{1}{C_1} & 1-\frac{-\Delta t}{\mathcal{R}_1\mathcal{C}_1} & 0 \\
            0 & 0  & 1-\frac{-\Delta t}{\mathcal{R}_2\mathcal{C}_2}
        \end{bmatrix}\!\!,
    A_3 \!=\! \begin{bmatrix}
            0 &  0 & -\frac{1}{\mathcal{L}} \\ 
            0 & 1-\frac{-\Delta t}{\mathcal{R}_1\mathcal{C}_1} & 0 \\
            -\frac{1}{\mathcal{C}_1} & 0  & 1-\frac{-\Delta t}{\mathcal{R}_2\mathcal{C}_2}
            \end{bmatrix}\!\!,
\end{aligned}
\end{medsize}
\]
and $B_1 = B_2 = B_3 = \begin{bmatrix}
                \frac{v_i}{\mathcal{L}}; 0 ; 0
            \end{bmatrix}$,
where $v_i \in [20,30] $ \emph{V} is input voltage; $\mathcal{R}_1 \in [39,56]$ \emph{$\Omega$} and $\mathcal{R}_2 \in [47,68]$ \emph{$\Omega$} are resistive loads; $\mathcal{L} = 5$ \emph{$mH$} denotes the inductance; and $\mathcal{C}_1 = \mathcal{C}_2=470$ \emph{$\mu F$} are the capacitance of capacitors. The parameters are based on the laboratory prototype used in \citep{mojallizadeh2018hybrid}.  
    Since $\mathcal{R}_1$, $\mathcal{R}_2$, and $v_i$ are uncertain, it is straightforward to determine $[\underbar{A}_\pi, \overbar{A}_\pi]$ and $[\underbar{B}_\pi, \overbar{B}_\pi]$ based on the upper and lower bounds of the uncertain parameters. The operating point $x_{e}$ 
    corresponding to $\lambda = {[0 ;0.5;0.5]}$ and the initial values $x_0={\left[0;0;0\right]}$ are selected. 
    The number of samples for scenario optimization problems are \(N = 900\) and \(M = 200\), with confidence \(\beta = 0.06\). The results obtained via Algorithm~\ref{alg} are illustrated in Figure~\ref{fig: simo ex}. 
    \begin{figure}[t]
    \centering
    \begin{minipage}[b]{0.43\linewidth}
        \includegraphics[width=\linewidth]{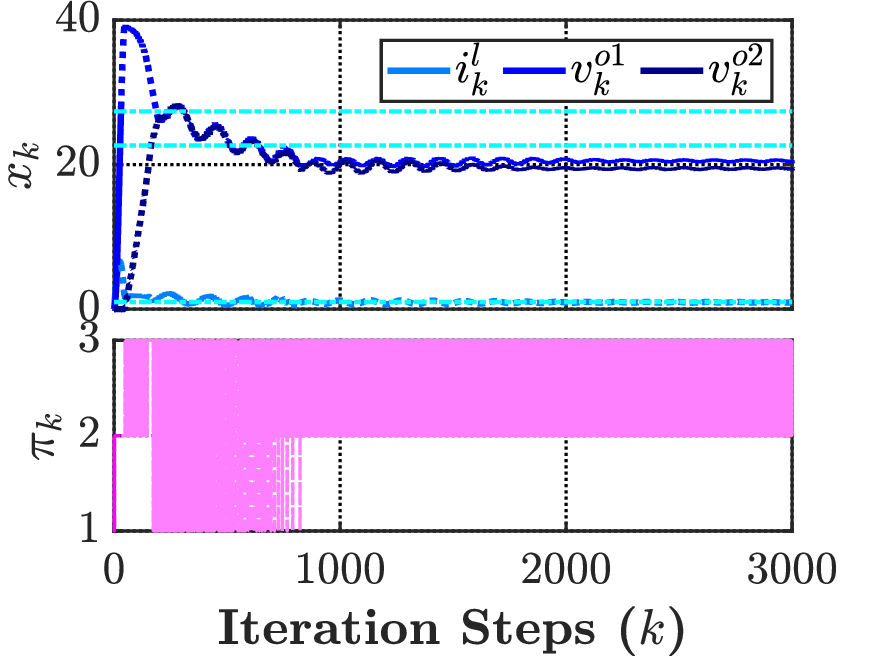}
        \centering (a)
    \end{minipage}
    \begin{minipage}[b]{0.43\linewidth}
        \includegraphics[width=\linewidth]{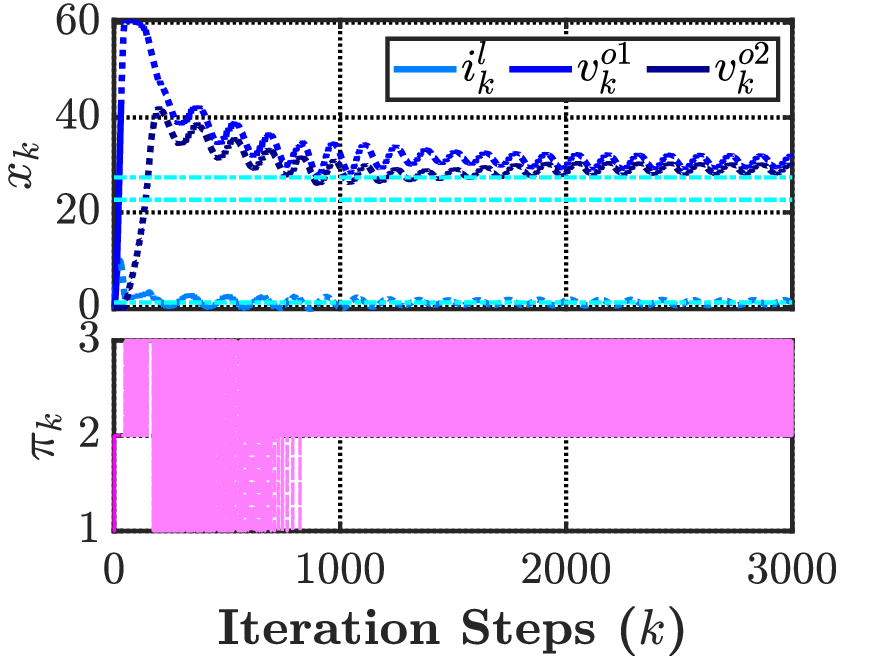}
        \centering (b)
    \end{minipage}
    \begin{minipage}[b]{0.43\linewidth}
        \includegraphics[width=\linewidth]{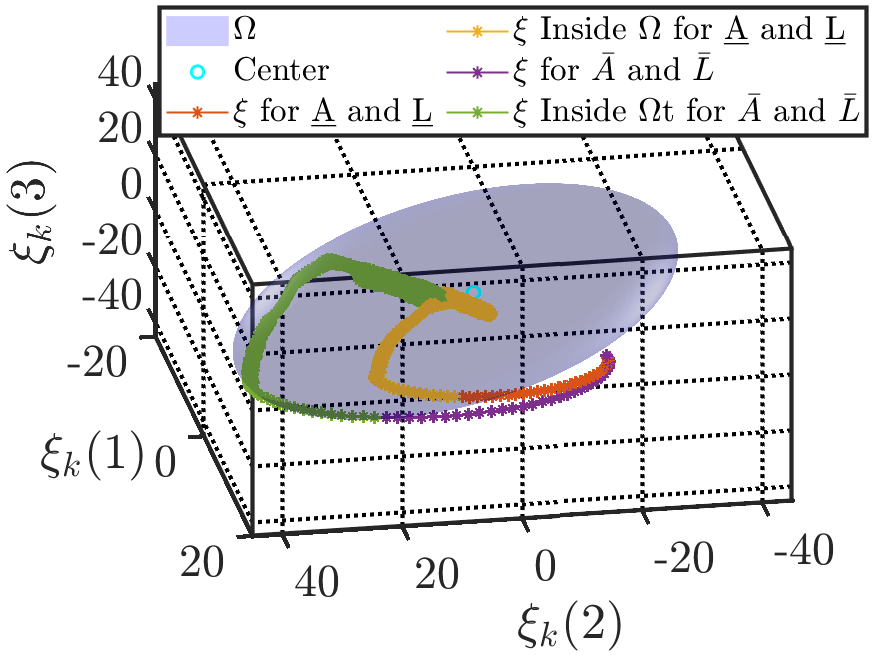}
        \centering (c)
    \end{minipage}
    \begin{minipage}[b]{0.43\linewidth}
        \includegraphics[width=\linewidth]{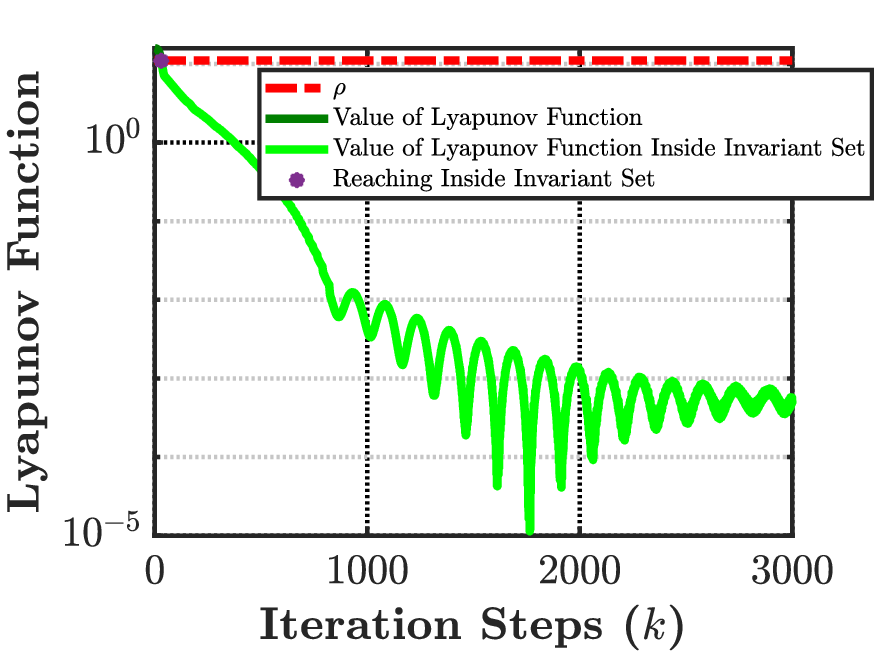}
        \centering (d)
    \end{minipage}
    \caption{The evolution of $x_k$ for lower- and upper-bound matrices is illustrated in (a) and  (b). The desired operating point $x_e$ is indicated by light blue in both (a) and (b). The switching law is shown in pink in both (a) and (b), in steady state, it is switching between $\pi = 2$ and $\pi =3$. The invariant set of attraction with error trajectories for both lower- and upper-bound matrices is expressed in (d). The Lyapunov function in (c) shows that the trajectories approach the invariant set of attraction and remain within the set.}
    \label{fig: simo ex}
\end{figure}

\subsection{Comparison with an alternative approach}
\label{app ex num}

We provide an example, adapted from \citep{albea2020robust}, to compare the tightness of the invariant set in our approach with the previous method proposed by \citet{albea2020robust}.

Consider an uncertain system described by \eqref{eq. dt SAS} that consists of three functioning modes
\begin{align*}
    &A_1 = \begin{bmatrix}
        0 & 0.15+\delta \\
        -0.35 & -1
    \end{bmatrix}, \quad
    A_2 = \begin{bmatrix}
        0.24 & 0.15+\delta \\
        -2.35 & -1
    \end{bmatrix}, \quad
    A_3 = \begin{bmatrix}
        -0.24 & 0.15+\delta \\
        -2.35 & -0.5
    \end{bmatrix}, \\[1em]
    &B_1 = \begin{bmatrix}
        1 \\
        0.35
    \end{bmatrix}, \quad 
    B_2 = \begin{bmatrix}
        -1 \\
        -0.35
    \end{bmatrix}, \quad
    B_3 = \begin{bmatrix}
        0.05 \\
        1.5
    \end{bmatrix},
\end{align*}
where $\delta$ represents an unknown, potentially time-varying perturbation to the system. It is assumed to be bounded by a known parameter $\bar{\delta} \in [0, 0.2]$. The desired equilibrium point is given by \( x_e = \begin{bmatrix} 0.1 ; 0.2 \end{bmatrix} \) with the parameter \( \lambda = [ 0.36 ; 0.3 ; 0.34] \), where the nominal value is considered to be \( \delta = 0 \). We applied Algorithm \ref{alg} to this example and considering $\delta = 0.2$ to examine the worst-case uncertainty. The number of samples for the scenario optimization are \(N = 300\) and \(M = 150\), with confidence parameter \(\beta = 0.04\). The results are shown in Figure \ref{fig: albae num}. 
The invariant set designed in this paper is more precise
compared to the invariant set proposed in \citep{albea2020robust}. This reduces chattering effects and increases accuracy. \label{Ex albae}
\begin{figure}[t]
    \centering
    \begin{minipage}[b]{0.43\linewidth}
        \includegraphics[width=\linewidth]{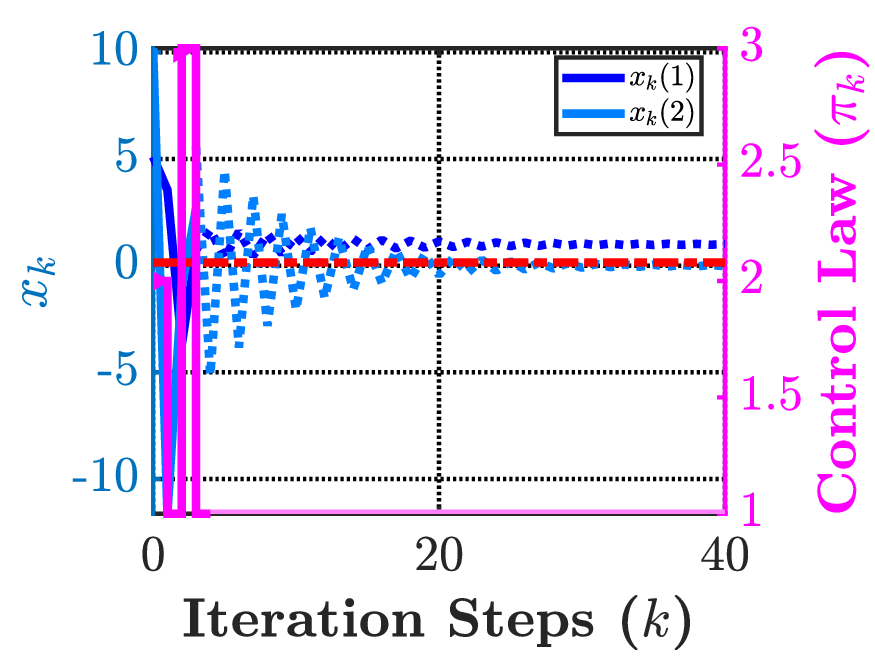}
        \centering (a)
    \end{minipage}
    \begin{minipage}[b]{0.43\linewidth}
        \includegraphics[width=\linewidth]{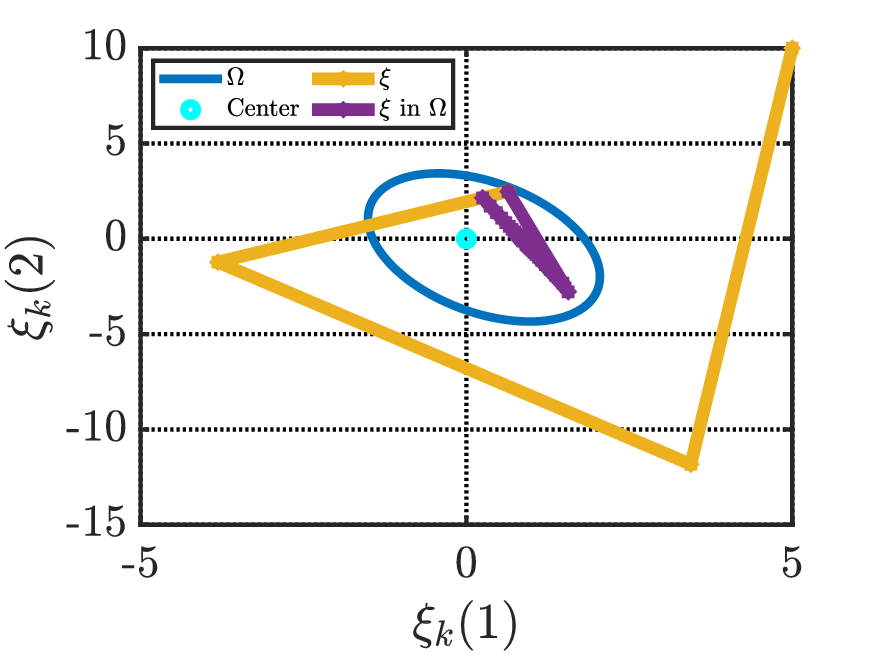}
        \centering (b)
    \end{minipage}
    \hfill
    \begin{minipage}[b]{0.43\linewidth}
        \includegraphics[width=\linewidth]{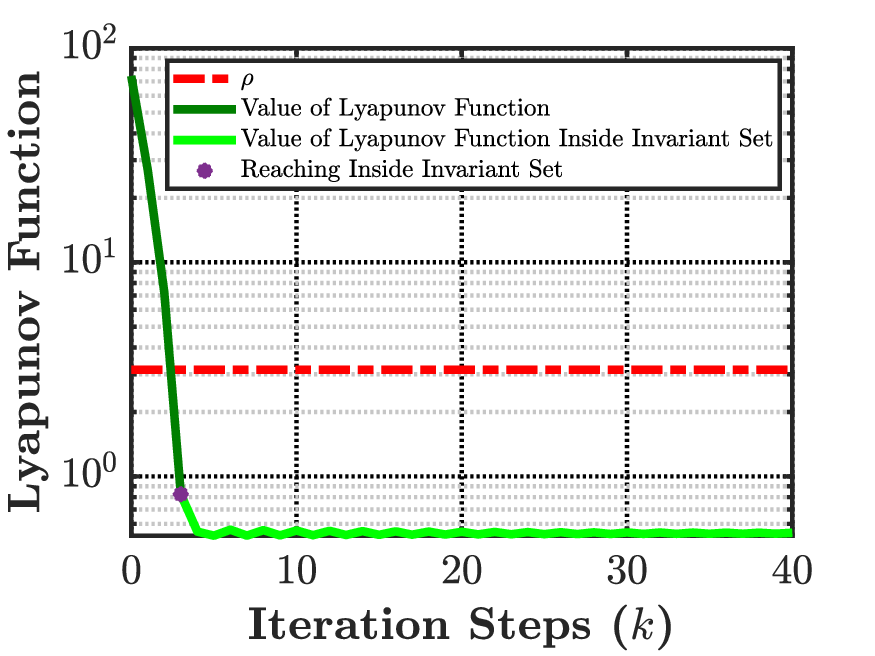}
        \centering (c)
    \end{minipage}
    \caption{The evolution of $x_k$ for $\delta = 0.2$ is illustrated in (a). The desired operating point $x_e$ is indicated by red, and the switching law is shown in pink in (a). The invariant set of attraction with error trajectories for $\delta = 0.2$ is expressed in (b). The Lyapunov function in (c) shows that the trajectories approach the invariant set of attraction and remain within the set.}
    \label{fig: albae num}
\end{figure}

\section{Conclusion} \label{sec conc}
This paper introduced a novel approach for designing switching control laws in uncertain switched affine systems through data-driven scenario programming. The effectiveness of the proposed approach was demonstrated through case studies from uncertain Markov processes and power electronic systems. Future work will focus on making the scenario program robust to unseen data, improving the scalability of the approach to handle large-scale MDPs and DC microgrids, and considering time-varying Lyapunov functions to address tracking problems in microgrids.

\newpage

\acks{The research of N. Monir is supported by the EPSRC EP/W524700/1 and Newcastle University Global Scholarship. The research of S. Soudjani is supported by the following grants: EIC 101070802 and ERC 101089047.}
\bibliography{biblio}

\begin{thebibliography}{41}
\providecommand{\natexlab}[1]{#1}
\providecommand{\url}[1]{\texttt{#1}}
\expandafter\ifx\csname urlstyle\endcsname\relax
  \providecommand{\doi}[1]{doi: #1}\else
  \providecommand{\doi}{doi: \begingroup \urlstyle{rm}\Url}\fi

\bibitem[Albea et~al.(2015)Albea, Garcia, and Zaccarian]{albea2015hybrid}
C.~Albea, G.~Garcia, and L.~Zaccarian.
\newblock Hybrid dynamic modeling and control of switched affine systems: application to dc-dc converters.
\newblock In \emph{2015 54th IEEE Conference on Decision and Control (CDC)}, pages 2264--2269. IEEE, 2015.

\bibitem[Albea~Sanchez et~al.(2020)Albea~Sanchez, Ventosa-Cutillas, Seuret, and Gordillo]{albea2020robust}
C.~Albea~Sanchez, A.~Ventosa-Cutillas, A.~Seuret, and F.~Gordillo.
\newblock Robust switching control design for uncertain discrete-time switched affine systems.
\newblock \emph{International Journal of Robust and Nonlinear Control}, 30\penalty0 (17):\penalty0 7089--7102, 2020.

\bibitem[Beneux et~al.(2018)Beneux, Astolfi, Riedinger, Daafouz, and Grimaud]{beneux2018integral}
G.~Beneux, D.~Astolfi, P.~Riedinger, J.~Daafouz, and L.~Grimaud.
\newblock Integral action for uncertain switched affine systems with application to dc/dc converters.
\newblock In \emph{2018 European Control Conference (ECC)}, pages 795--800. IEEE, 2018.

\bibitem[Berberich et~al.(2020)Berberich, K{\"o}hler, M{\"u}ller, and Allg{\"o}wer]{berberich2020data}
J.~Berberich, J.~K{\"o}hler, M.A. M{\"u}ller, and F.~Allg{\"o}wer.
\newblock Data-driven model predictive control with stability and robustness guarantees.
\newblock \emph{IEEE Transactions on Automatic Control}, 66\penalty0 (4):\penalty0 1702--1717, 2020.

\bibitem[Boyd and Vandenberghe(2004)]{boyd2004convex}
S.~Boyd and L.~Vandenberghe.
\newblock \emph{Convex optimization}.
\newblock Cambridge university press, 2004.

\bibitem[Butcher(2016)]{butcher2016numerical}
J.C. Butcher.
\newblock \emph{Numerical methods for ordinary differential equations}.
\newblock John Wiley \& Sons, 2016.

\bibitem[Campi and Garatti(2018{\natexlab{a}})]{campi2018introduction}
M.C. Campi and S.~Garatti.
\newblock \emph{Introduction to the scenario approach}.
\newblock SIAM, 2018{\natexlab{a}}.

\bibitem[Campi and Garatti(2018{\natexlab{b}})]{campi2018wait}
M.C. Campi and S.~Garatti.
\newblock Wait-and-judge scenario optimization.
\newblock \emph{Mathematical Programming}, 167:\penalty0 155--189, 2018{\natexlab{b}}.

\bibitem[Campi et~al.(2009)Campi, Garatti, and Prandini]{campi2009scenario}
M.C. Campi, S.~Garatti, and M.~Prandini.
\newblock The scenario approach for systems and control design.
\newblock \emph{Annual Reviews in Control}, 33\penalty0 (2):\penalty0 149--157, 2009.

\bibitem[Campi et~al.(2021{\natexlab{a}})Campi, Car{\`e}, and Garatti]{campi2021scenario}
M.C. Campi, A.~Car{\`e}, and S.~Garatti.
\newblock The scenario approach: A tool at the service of data-driven decision making.
\newblock \emph{Annual Reviews in Control}, 52:\penalty0 1--17, 2021{\natexlab{a}}.

\bibitem[Campi et~al.(2021{\natexlab{b}})Campi, Carè, and Garatti]{CAMPI20211}
M.C. Campi, A.~Carè, and S.~Garatti.
\newblock The scenario approach: A tool at the service of data-driven decision making.
\newblock \emph{Annual Reviews in Control}, 52:\penalty0 1--17, 2021{\natexlab{b}}.
\newblock ISSN 1367-5788.

\bibitem[De~Persis and Tesi(2019)]{de2019formulas}
C.~De~Persis and P.~Tesi.
\newblock Formulas for data-driven control: Stabilization, optimality, and robustness.
\newblock \emph{IEEE Transactions on Automatic Control}, 65\penalty0 (3):\penalty0 909--924, 2019.

\bibitem[Deaecto and Geromel(2016)]{deaecto2016stability}
G.S. Deaecto and J.C. Geromel.
\newblock Stability analysis and control design of discrete-time switched affine systems.
\newblock \emph{IEEE Transactions on Automatic Control}, 62\penalty0 (8):\penalty0 4058--4065, 2016.

\bibitem[Deaecto et~al.(2010)Deaecto, Geromel, Garcia, and Pomilio]{deaecto2010switched}
G.S. Deaecto, J.C. Geromel, F.S. Garcia, and J.A. Pomilio.
\newblock Switched affine systems control design with application to dc--dc converters.
\newblock \emph{IET control theory \& applications}, 4\penalty0 (7):\penalty0 1201--1210, 2010.

\bibitem[Della~Rossa et~al.(2021)Della~Rossa, Wang, Egidio, and Jungers]{della2021data}
M.~Della~Rossa, Z.~Wang, L.N. Egidio, and R.M. Jungers.
\newblock Data-driven stability analysis of switched affine systems.
\newblock In \emph{2021 60th IEEE Conference on Decision and Control (CDC)}, pages 3204--3209. IEEE, 2021.

\bibitem[Dietrich et~al.(2024)Dietrich, Devonport, and Arcak]{dietrich2024nonconvex}
E.~Dietrich, A.~Devonport, and M.~Arcak.
\newblock Nonconvex scenario optimization for data-driven reachability.
\newblock In \emph{6th Annual Learning for Dynamics \& Control Conference}, pages 514--527. PMLR, 2024.

\bibitem[Galambos(1977)]{galambos1977bonferroni}
J.~Galambos.
\newblock Bonferroni inequalities.
\newblock \emph{The Annals of Probability}, pages 577--581, 1977.

\bibitem[Garatti and Campi(2024)]{garatti2024non}
S.~Garatti and M.C. Campi.
\newblock Non-convex scenario optimization.
\newblock \emph{Mathematical Programming}, pages 1--52, 2024.

\bibitem[Grant and Boyd(2014)]{cvx}
M.~Grant and S.~Boyd.
\newblock {CVX}: Matlab software for disciplined convex programming, version 2.1.
\newblock \url{https://cvxr.com/cvx}, March 2014.

\bibitem[Hahn et~al.(2019)Hahn, Hashemi, Hermanns, Lahijanian, and Turrini]{hahn2019interval}
E.M. Hahn, V.~Hashemi, H.~Hermanns, M.~Lahijanian, and A.~Turrini.
\newblock Interval {Markov} decision processes with multiple objectives: From robust strategies to {Pareto} curves.
\newblock \emph{ACM Transactions on Modeling and Computer Simulation (TOMACS)}, 29\penalty0 (4):\penalty0 1--31, 2019.

\bibitem[Iervolino et~al.(2023)Iervolino, Tipaldi, and Forootani]{iervolino2023lyapunov}
R.~Iervolino, M.~Tipaldi, and A.~Forootani.
\newblock A lyapunov-based version of the value iteration algorithm formulated as a discrete-time switched affine system.
\newblock \emph{International Journal of Control}, 96\penalty0 (3):\penalty0 577--592, 2023.

\bibitem[Kazemi et~al.(2024)Kazemi, Majumdar, Salamati, Soudjani, and Wooding]{kazemi2024data}
Milad Kazemi, Rupak Majumdar, Mahmoud Salamati, Sadegh Soudjani, and Ben Wooding.
\newblock Data-driven abstraction-based control synthesis.
\newblock \emph{Nonlinear Analysis: Hybrid Systems}, 52:\penalty0 101467, 2024.

\bibitem[Kordabad et~al.(2025)Kordabad, Vlahakis, Lindemann, Gros, Dimarogonas, and Soudjani]{kordabad2025data}
Arash~Bahari Kordabad, Eleftherios~E Vlahakis, Lars Lindemann, Sebastien Gros, Dimos~V Dimarogonas, and Sadegh Soudjani.
\newblock Data-driven distributionally robust control for interacting agents under logical constraints.
\newblock \emph{arXiv preprint arXiv:2503.09816}, 2025.

\bibitem[Laino et~al.(2025)Laino, Wooding, Soudjani, and Davenport]{laino2025logic}
Anna~S Laino, Ben Wooding, Sadegh Soudjani, and Russell~J Davenport.
\newblock A logic-based resilience metric for water resource recovery facilities.
\newblock \emph{Environmental Science: Water Research \& Technology}, 11\penalty0 (2):\penalty0 377--392, 2025.

\bibitem[Makdesi et~al.(2021)Makdesi, Girard, and Fribourg]{makdesi2021data}
Anas Makdesi, Antoine Girard, and Laurent Fribourg.
\newblock Data-driven abstraction of monotone systems.
\newblock In \emph{Learning for Dynamics and Control}, pages 803--814. PMLR, 2021.

\bibitem[Mojallizadeh and Badamchizadeh(2018)]{mojallizadeh2018hybrid}
M.R. Mojallizadeh and M.A. Badamchizadeh.
\newblock Hybrid control of single-inductor multiple-output converters.
\newblock \emph{IEEE Transactions on Industrial Electronics}, 66\penalty0 (1):\penalty0 451--458, 2018.

\bibitem[Monir et~al.(2024)Monir, Sch{\"o}n, and Soudjani]{monir2024lyapunov}
N.~Monir, O.~Sch{\"o}n, and S.~Soudjani.
\newblock Lyapunov-based policy synthesis for multi-objective interval mdps.
\newblock \emph{IFAC-PapersOnLine}, 58\penalty0 (11):\penalty0 99--106, 2024.

\bibitem[Monir et~al.(2025)Monir, Sadabadi, and Soudjani]{monir2025logic}
Negar Monir, Mahdieh~S Sadabadi, and Sadegh Soudjani.
\newblock Logic-based resilience computation of power systems against frequency requirements.
\newblock \emph{European Control Conference}, 2025.

\bibitem[Nazeri et~al.(2025)Nazeri, Badings, Soudjani, and Abate]{nazeri2025data}
Mahdi Nazeri, Thom Badings, Sadegh Soudjani, and Alessandro Abate.
\newblock Data-driven yet formal policy synthesis for stochastic nonlinear dynamical systems.
\newblock \emph{Learning for Dynamics and Control}, 2025.

\bibitem[Salamati and Zamani(2022)]{salamati2022data}
A.~Salamati and M.~Zamani.
\newblock Data-driven safety verification of stochastic systems via barrier certificates: A wait-and-judge approach.
\newblock In \emph{Learning for Dynamics and Control Conference}, pages 441--452. PMLR, 2022.

\bibitem[Salamati et~al.(2024)Salamati, Lavaei, Soudjani, and Zamani]{salamati2024data}
Ali Salamati, Abolfazl Lavaei, Sadegh Soudjani, and Majid Zamani.
\newblock Data-driven verification and synthesis of stochastic systems via barrier certificates.
\newblock \emph{Automatica}, 159:\penalty0 111323, 2024.

\bibitem[Saoud et~al.(2024)Saoud, Jagtap, and Soudjani]{saoud2024temporal}
Adnane Saoud, Pushpak Jagtap, and Sadegh Soudjani.
\newblock Temporal logic resilience for dynamical systems.
\newblock \emph{arXiv preprint arXiv:2404.19223}, 2024.

\bibitem[Sch{\"o}n et~al.(2024{\natexlab{a}})Sch{\"o}n, van Huijgevoort, Haesaert, and Soudjani]{schon2024bayesian}
Oliver Sch{\"o}n, Birgit van Huijgevoort, Sofie Haesaert, and Sadegh Soudjani.
\newblock Bayesian formal synthesis of unknown systems via robust simulation relations.
\newblock \emph{IEEE Transactions on Automatic Control}, 2024{\natexlab{a}}.

\bibitem[Sch{\"o}n et~al.(2024{\natexlab{b}})Sch{\"o}n, Zhong, and Soudjani]{schon2024data}
Oliver Sch{\"o}n, Zhengang Zhong, and Sadegh Soudjani.
\newblock Data-driven distributionally robust safety verification using barrier certificates and conditional mean embeddings.
\newblock In \emph{2024 American Control Conference (ACC)}, pages 3417--3423. IEEE, 2024{\natexlab{b}}.

\bibitem[Seatzu et~al.(2006)Seatzu, Corona, Giua, and Bemporad]{seatzu2006optimal}
C.~Seatzu, D.~Corona, A.~Giua, and A.~Bemporad.
\newblock Optimal control of continuous-time switched affine systems.
\newblock \emph{IEEE transactions on automatic control}, 51\penalty0 (5):\penalty0 726--741, 2006.

\bibitem[Seuret et~al.(2023)Seuret, Albea, and Gordillo]{seuret2023practical}
A.~Seuret, C.~Albea, and F.~Gordillo.
\newblock Practical stabilization of switched affine systems: Model and data-driven conditions.
\newblock \emph{IEEE Control Systems Letters}, 7:\penalty0 1628--1633, 2023.

\bibitem[Skovbekk et~al.(2025)Skovbekk, Laurenti, Frew, and Lahijanian]{skovbekk2025formal}
John Skovbekk, Luca Laurenti, Eric Frew, and Morteza Lahijanian.
\newblock Formal verification of unknown dynamical systems via {G}aussian process regression.
\newblock \emph{IEEE Transactions on Automatic Control}, 2025.

\bibitem[Van~Waarde et~al.(2020)Van~Waarde, Eising, Trentelman, and Camlibel]{van2020data}
H.J. Van~Waarde, J.~Eising, H.L. Trentelman, and M.K. Camlibel.
\newblock Data informativity: A new perspective on data-driven analysis and control.
\newblock \emph{IEEE Transactions on Automatic Control}, 65\penalty0 (11):\penalty0 4753--4768, 2020.

\bibitem[Wang et~al.(2024)Wang, Berger, and Jungers]{wang2024data}
Zheming Wang, Guillaume~O Berger, and Rapha{\"e}l~M Jungers.
\newblock Data-driven control of unknown switched linear systems using scenario optimization.
\newblock \emph{IEEE Transactions on Automatic Control}, 2024.

\bibitem[Wu et~al.(2023)Wu, Zhang, Haesaert, Ma, and Sun]{wu2023risk}
Lin-Chi Wu, Zengjie Zhang, Sofie Haesaert, Zhiqiang Ma, and Zhiyong Sun.
\newblock Risk-aware reward shaping of reinforcement learning agents for autonomous driving.
\newblock In \emph{IECON 2023-49th Annual Conference of the IEEE Industrial Electronics Society}, pages 1--6. IEEE, 2023.

\bibitem[Zhang et~al.(2022)Zhang, Lou, and Wang]{zhang2022output}
Lingyu Zhang, Xuyang Lou, and Zhan Wang.
\newblock Output-based robust switching rule design for uncertain switched affine systems: Application to dc--dc converters.
\newblock \emph{IEEE Transactions on Circuits and Systems II: Express Briefs}, 69\penalty0 (11):\penalty0 4493--4497, 2022.

\end{thebibliography}

\end{document}